\newcommand{\Spin}{{\mathrm {Spin}}}
\newcommand{\SU}{{\mathrm {SU}}}
\newcommand{\SO}{{\mathrm {SO}}}
\newcommand{\so}{{\mathrm {so}}}
\newcommand{\dd}{{\mathrm d}}      
\DeclareMathOperator{\tr}{Tr}  
\newcommand{\R}{{\mathbb R}}   
\newcommand{\C}{{\mathbb C}}   
\newcommand{\HH}{{\mathbb H}}   
\newcommand{\Z}{{\mathbb Z}}   
\newcommand\End{\mbox{End}}
\renewcommand\Re{\operatorname{Re}} 
\DeclareMathOperator{\Lie}{Lie}
\newcommand{\SpinOperator}{{S}} 
\theoremstyle{definition}
\newtheorem{lemma}{Lemma} 
\newtheorem{definition}{Definition} 
\newtheorem{example}{Example} 
\newtheorem{examples}[example]{Examples} 
\begin{document}

\title {Matrix geometries and fuzzy spaces as finite spectral triples}

\author{John W. Barrett
\\ \\
School of Mathematical Sciences\\
University of Nottingham\\
University Park\\
Nottingham NG7 2RD, UK\\
\\
E-mail john.barrett@nottingham.ac.uk}

\date{}

\maketitle

\begin{abstract} A class of real spectral triples that are similar in structure to a Riemannian manifold but have a finite-dimensional Hilbert space is defined and investigated, determining a general form for the Dirac operator. Examples include fuzzy spaces defined as real spectral triples. Fuzzy 2-spheres are investigated in detail, and it is shown that the fuzzy analogues correspond to two spinor fields on the commutative sphere. In some cases it is necessary to add a mass mixing matrix to the commutative Dirac operator to get a precise agreement for the eigenvalues.  
\end{abstract}

\section{Introduction}

A  common theme in geometry is to try to characterise a space in terms of its algebraic structures. 
In Riemannian geometry the functions on a manifold can be multiplied, using the usual pointwise definition
$\bigl(f_1f_2\bigr)(x)=f_1(x)f_2(x).$ This product makes the space of functions an algebra which is commutative, $f_1f_2=f_2f_1$, and infinite-dimensional. Further structures are needed to characterise the geometry completely, for example, specifying the properties of the Dirac operator.

This algebraic way of thinking about geometry is particularly useful in applications to quantum physics, since quantum mechanics is phrased in terms of differential operators. It also suggests the generalisation of geometry to a situation where the analogues of functions or `coordinates' on a space no longer commute; this is called non-commutative geometry. An obvious example is for the phase space of a particle, the functions on phase space being replaced by non-commuting operators in Hilbert space. Another place where this idea is useful is in fermionic systems, where the classical fields anti-commute rather than commute. A more sophisticated application is to the internal space of the standard model, which has a non-commutative structure in which the algebra is a finite-dimensional algebra of matrices \cite{Barrett:2006qq,Connes:2006qv}. 

The fuzzy sphere  \cite{Madore:1991bw} is a simple example of a non-commutative geometry that is an approximation to the commutative 2-sphere $S^2$ with its usual Riemannian metric. The algebra of the fuzzy sphere is the space of $n\times n$ matrices and the geometry was originally specified using an analogue of the Laplace operator. The algebra can be identified (as a vector space) with the finite-dimensional vector space formed by the spherical harmonics on $S^2$ up to some maximum total spin. Thus one can understand the fuzzy sphere as a modification of the commutative sphere geometry so that there is a minimum wavelength for functions on the space. 

This is an interesting feature of the geometry that it is worthwhile trying to generalise. Indeed in quantum gravity there is a fundamental length scale, called the Planck length, and there are no known physical phenomena with a length scale shorter than this length. Thus it seems a fruitful project to build models of geometry in which a minimum wavelength is built in in a fundamental way. The fuzzy sphere can be generalised to fuzzy versions of other spaces on which a symmetry group acts, following the method of \cite{Grosse:1999ci} for $\C\mathrm{P}^2$.

In Riemannian geometry, a spin manifold with a metric is completely determined by the algebra of functions and the Dirac operator. The properties of the Dirac operator can be axiomatised so that an alternative way of thinking about the geometry is that the Dirac operator is the primary variable and the Riemannian metric is derived from it.  The properties of the Dirac operator generalise to the case of non-commutative algebras and so the guiding idea in Connes' approach to non-commutative geometry \cite{Connes:1994yd} is to take the Dirac operator and its associated structures as fundamental.  The axioms for a Dirac operator in non-commutative geometry are called the axioms of a real spectral triple \cite{Connes:1996gi}. An important aspect of the non-commutative case is that the algebra acts twice on the Hilbert space of fermions, once as a left action, and once as a right action that commutes with the left action.

The purpose of this paper is to define and investigate a class of real spectral triples that are similar in structure to Riemannian manifolds but are genuinely non-commutative. These geometries are called here `matrix geometries'. The first requirement is that the algebra of functions on the manifold is replaced by a finite-dimensional matrix algebra $\mathcal A$. On a Riemannian manifold there is also an action of a Clifford algebra in the spinor space $V$ at each point. Therefore, the second requirement of a matrix geometry is that it admits an action of a Clifford algebra. This action should exhaust the degeneracy of the representations of the algebra $\mathcal A$. Thus the matrix geometries are in some sense finite-dimensional approximations to a Riemannian manifold, though a notion of approximation is not investigated here.

The required background material on gamma matrices and their products is given in \S\ref{sec:gamma}. A Clifford algebra is called type $(p,q)$ if it has $p$ generators that square to $+1$ and $q$ generators that square to $-1$; these generators represented as matrices are called gamma matrices. The axioms for spectral triples that have a finite-dimensional Hilbert space are given in \S\ref{sec:spectral}. These are called finite spectral triples and for this case there is a decomposition of the Dirac operator into a part that commutes with the left action of $\mathcal A$ and a part that commutes with the right action. This decomposition is determined by a Frobenius form on $\mathcal A$. 

The notion of a matrix geometry is defined in \S\ref{sec:matrixgeometry}. This general definition is simplified in \S\ref{sec:fuzzygeometries} by considering the fuzzy spaces, which are examples in which the algebra $\mathcal A$ is a simple algebra (the set of all $n\times n$ matrices), and the generalised fuzzy spaces, in which $\mathcal A$ is the direct sum of two simple matrix algebras. For these cases, the structure of the Dirac operator simplifies and so these are most amenable to the computation of features of specific examples. It is apparent in examples that the structure of the Dirac operator is somewhat similar to the case of Riemannian geometry,  with derivatives replaced by commutators and functions replaced by anticommutators.

The final section, \S\ref{sec:spheres}, is devoted to the fuzzy spheres, which are fuzzy versions of the commutative 2-sphere. A Dirac operator for the fuzzy sphere was proposed by Grosse and Presnajder \cite{Grosse:1994ed} using two-dimensional spinors (only the case $k=0$ in their notation is considered here). Let $\sigma^j$, $j=1,2,3$, be $i$ times the $2\times 2$ Pauli matrices. These are type $(0,3)$ gamma matrices. The Hilbert space is $\mathcal H_{GP}=\C^2\otimes M(n,\C)$, with the notation $M(n,\C)$ denoting the $n\times n$ matrices with coefficients in $\C$.
Define $L_{jk}$ to be standard generators of the Lie algebra $\so(3)$ using the $n$-dimensional irreducible representation. The Grosse-Presnajder operator on $\mathcal H_{GP}$  is
\begin{equation}\label{GP}
\widetilde d=1+\sum_{j<k}\sigma^j\sigma^k\otimes [L_{jk},\cdot\,]
\end{equation}
One expects for a space of even dimension that there should be a chirality operator, whose eigenspaces determine the left- and right-handed modes.
 It is readily seen that the spectrum of $\widetilde d$ is not symmetric about 0, so there can be no chirality operator (although there is some literature proposing a degenerate version of a chirality operator \cite{Balachandran:1999qu}).

A new definition for the Dirac operator for a fuzzy sphere is made using four-component spinors.
 The Hilbert space is thus  $\mathcal H=\C^4\otimes M(n,\C)$. It admits a chirality operator and all of the axioms of a real spectral triple are obeyed. The Dirac operator can be decomposed into a direct sum of the Grosse-Presnajder operator with its negative, and so has a symmetric spectrum. One particular noteworthy feature of the fuzzy sphere is that it is necessary to use the type $(1,3)$ Clifford algebra to arrive at the correct algebraic properties for a sphere, in which the parameter called the signature in \cite{Barrett:2006qq}, or the KO-dimension in \cite{Connes:2006qv}, is equal to 2. The name KO-dimension will be used for this parameter, though it is not necessary to know anything about the K-theory from which the name is derived.

The construction of the fuzzy sphere is generalised to a Hilbert space based on non-square matrices in \S\ref{sec:gensphere} and is called the generalised fuzzy sphere. This is based on the operator introduced in  \cite{Grosse:1995jt}, but the Dirac operator presented here is new.

The relation of these to the commutative sphere $S^2$ is discussed in \S\ref{sec:commgeom}. The main feature is that the fuzzy sphere corresponds to two copies of the spinor field on $S^2$. This `fermion doubling' can be understood from the commutative geometry as an essential step to trivialise both the spinor bundle and the chirality operator before approximating by the fuzzy space. This principle, that a fuzzy space can only approximate trivial spinor data, seems an interesting idea that is worth testing in other examples. By comparing with the commutative Dirac operator, it is shown that the Grosse-Presnajder operator is in fact more analogous to what is called in the literature a modified Dirac operator, which is the commutative Dirac operator times the chirality operator.  

In the case of the generalised fuzzy sphere, the commutative model involves twisting the Dirac operator with a monopole bundle, doubling the fermions by taking two copies of the bundle, and then adding a constant matrix to the Dirac operator that is analogous to a mass mixing matrix in particle physics. Only then is the  commutative Dirac operator fully analogous to the generalised fuzzy case. 

One finds that the eigenvalues of the Dirac operator of the fuzzy sphere, and generalised fuzzy sphere, are given in terms of the Casimir operators of the Lie algebra $\so(3)$ by the same formula as in the commutative case. Thus the eigenvalues coincide whenever the representations of $\so(3)$ correspond. The difference in the fuzzy case is simply that only a finite list of irreducible representations occurs. This can be considered as the same representation-theoretic content of the spinor fields on $S^2$ but with the implementation of a short-wavelength cut-off.

Finally, some remarks about the motivation for this study. As remarked earlier, in quantum gravity one expects a cut-off in the eigenvalues of the Dirac operator at the Planck scale. However quantum gravity is not the study of single geometries but, from a functional integral point of view, the study of integration over a space of geometries. The interesting thing is that for a finite-dimensional Hilbert space, once the algebra of coordinates and Clifford algebra are fixed, the
set $\mathcal G$ of possible Dirac operators describing different geometries for this algebra is a finite-dimensional vector space of matrices. 

This suggests the definition of quantum geometries and random geometries using an integral over $\mathcal G$.
The translation-invariant measure $\dd\mathcal D$ is uniquely determined, up to an overall scaling. Given an action $S\colon\mathcal G\to\R$ and an `observable' function  $f\colon\mathcal G\to\R$, the expectation value of $f$ in the quantum geometry is defined as
\begin{equation}\langle f\rangle=\frac{\int f(D) e^{iS(D)} \;\dd\mathcal D}{\int e^{iS(D)} \;\dd\mathcal D}.
\end{equation}
Likewise, there is a `Euclidean' version in which the geometries are random in the probabilistic rather than quantum sense. Here, it is required that $S\ge0$, and
\begin{equation}\langle f\rangle=\frac{\int f(D) e^{-S(D)} \;\dd\mathcal D}{\int e^{-S(D)} \;\dd\mathcal D}.
\end{equation} Since the Dirac operator is a matrix, these integrals can be viewed as matrix models, with very particular constraints on the form of the matrices. It is hoped that the definitions and results in this paper will be useful in future studies of these notions of quantum geometry and random geometry.

\section{Gamma matrices}\label{sec:gamma}
The theory of gamma matrices is well-known \cite{BudinichTrautman}. However a variety of different conventions and constructions are used and not all the results needed are gathered in one place. So an explicit construction of the gamma matrices and the associated structures that are needed is given here. 

  The notation $x^*$ is used for the (Hermitian) adjoint of a linear operator on  a Hilbert space, and also, interchangably, for the Hermitian conjugate of a matrix. The complex conjugate of a complex number $c$ is written $\overline c$.

Let $\eta^{ab}$ be a diagonal matrix with diagonal entries $\pm1$. If the diagonal entries have $p$ occurences of $+1$ and $q$ of $-1$, then the matrix is of type $(p,q)$. The  dimension is $n=p+q$.   The matrix $\eta$ determines a real Clifford algebra over $\R^n$ and a representation of the Clifford algebra as complex matrices is called a Clifford module. The characteristics of Clifford modules depend to a large extent on $q-p$, with a pattern that repeats with period eight. The parameter $s\equiv q-p \mod 8$ is called the signature of the Clifford algebra.
 
Explicitly, a Clifford module for $\eta$ is a set of $n$ matrices $\gamma^1,\gamma^2,\ldots \gamma^n$ such that
\begin{equation}\gamma^a\gamma^b+\gamma^b\gamma^a=2\eta^{ab}.\end{equation}
This is said to be a Clifford module of type $(p,q)$ and the matrices are called gamma matrices.  

Denote the vector space these matrices act on as $V=\C^k$. It will be assumed that $V$ carries the standard Hermitian inner product $(v,w)=\sum_j \bar v_j w_j$ and that the $\gamma^a$ are all unitary. This assumption is not at all restrictive since the $\gamma^a$ generate a finite group and any finite group representation is unitary for some Hermitian inner product. All that is required here is to use a basis of $V$ in which this Hermitian form is the standard one. It follows that if $(\gamma^a)^2=1$, then $\gamma^a$ is a Hermitian matrix, and if $(\gamma^a)^2=-1$ then $\gamma^a$ is anti-Hermitian.  In the same way as for finite groups, any Clifford module decomposes into a direct sum of irreducible ones.

The order of the gamma matrices (up to even permutations of $1,2,\ldots,n$) is important. Using this order, one can form the product
\begin{equation}P=\gamma^1\gamma^2\ldots\gamma^n\end{equation}
that plays a key role in the theory. It is easily shown that
\begin{equation}P^2=(-1)^{\frac12 s(s+1)}\end{equation}
so that defining the chirality operator
\begin{equation}\label{gammadef}\gamma=i^{\frac12 s(s+1)}P,\end{equation}
one has $\gamma^2=1$, $\gamma^*=\gamma$.  (In the physics literature for $n=4$, $\gamma$ is called `$\gamma^5$'). For $n$ even, $\gamma$ anti-commutes with all gamma matrices. For $n$ odd it commutes with all gamma matrices.

In the case of irreducible gamma matrices, the dimension of the spinors is $k=2^{n/2}$ for $n$ even and $k={2^{(n-1)/2}}$ for $n$ odd. For $n$ even, the irreducible representation of the Clifford algebra is unique up to equivalence. For $n$ odd, there are two inequivalent representations characterised by $\gamma=1$ or $\gamma=-1$. 

These statements can be proved by realising that a pair of gamma matrices, say $\gamma^1$ and $\gamma^2$, generates a Clifford module, and an eigenvector of $\gamma^1\gamma^2$ generates an irreducible submodule of dimension 2. Thus for $n$ even, a simultaneous eigenvector of $\gamma^1\gamma^2, \gamma^3\gamma^4,\ldots,\gamma^{n-1}\gamma^n$ generates an irreducible Clifford module of dimension $2^{n/2}$ containing all of the possible simultaneous eigenvalues, hence the uniqueness. If $n$ is odd, one uses a simultaneous eigenvector of $\gamma^1\gamma^2, \gamma^3\gamma^4,\ldots,\gamma^{n-2}\gamma^{n-1},\gamma^n$. The action of the gamma matrices generates one half of the possible simultaneous eigenvectors, classified by the eigenvalue of $\gamma$. Thus there are two inequivalent irreducible Clifford modules.

\subsection{Real structures}
A real structure for a Clifford module is an antilinear operator $C\colon V\to V$ satisfying
\begin{itemize}\item $C^2=\epsilon=\pm 1$
\item $(Cv,Cw)=(w,v)$
\item $C\gamma^a=\epsilon'\gamma^a C$, with $\epsilon'=\pm1$.
\end{itemize}

The signs $\epsilon$ and $\epsilon'$ are given in figure \ref{signtable}.
\begin{figure}
$$\begin{tabular}{|l|rrrrrrrr|}
  \hline
  s &0&1&2&3&4&5&6&7\\
\hline
$\epsilon$&1&1&-1&-1&-1&-1&1&1\\
$\epsilon'$&1&-1&1&1&1&-1&1&1\\
$\epsilon''$&1&1&-1&1&1&1&-1&1\\
  \hline
&$\R$&$\C$&$\HH$&$\HH$&$\HH$&$\C$&$\R$&$\R$\\
\hline
\end{tabular}$$
\caption{Table of signs for Clifford modules and real spectral triples.}\label{signtable}
\end{figure}
One can calculate an additional sign $\epsilon''=\pm1$ such that
\begin{equation} C\gamma=\epsilon''\gamma C.\end{equation}
This sign is also given in figure \ref{signtable}. Since the gamma matrices generate a simple algebra over $\R$, this is a matrix algebra over one of the real division algebras $\R$, $\C$ or $\HH$. This is indicated in the final row of figure \ref{signtable}. 

\begin{examples}\label{ex:smallcliff} Examples for $n\le2$ are
\begin{itemize}\item Type (0,0) 
\begin{equation}\label{type00}\begin{gathered}
s=0, V=\C^1, C\begin{pmatrix}v_1 \end{pmatrix}=\begin{pmatrix}\overline v_1 \end{pmatrix}\\
  \gamma=\begin{pmatrix}1\end{pmatrix}
\end{gathered}\end{equation}
\item Type (1,0)
\begin{equation}\label{type10}\begin{gathered}
s=7, V=\C^1, C\begin{pmatrix}v_1\end{pmatrix}=\begin{pmatrix}\overline v_1 \end{pmatrix}\\
\gamma^1=\begin{pmatrix}1\end{pmatrix}, \gamma=\begin{pmatrix}1\end{pmatrix}
\end{gathered}\end{equation}
\item Type (0,1)
\begin{equation}\label{type01}\begin{gathered}
s=1,  V=\C^1, C\begin{pmatrix}v_1\end{pmatrix}=\begin{pmatrix}\overline v_1 \end{pmatrix}\\
\gamma^1=\begin{pmatrix}-i\end{pmatrix}, \gamma=\begin{pmatrix}1\end{pmatrix}
\end{gathered}\end{equation}
\item Type (2,0)
\begin{equation}\label{type20}\begin{gathered}
s=6 ,  V=\C^2, C\begin{pmatrix}v_1\\v_2\end{pmatrix}=\begin{pmatrix}\overline v_1\\\overline v_2 \end{pmatrix}\\
\gamma^1=\begin{pmatrix}1&0\\0&-1\end{pmatrix},
\gamma^2=\begin{pmatrix}0&1\\1&0\end{pmatrix},
\gamma= \begin{pmatrix}0&i\\-i&0\end{pmatrix}  
\end{gathered}\end{equation}
\item Type (1,1)
\begin{equation}\label{type11}\begin{gathered}
s=0 ,  V=\C^2, C\begin{pmatrix}v_1\\v_2\end{pmatrix}=\begin{pmatrix}\overline v_1\\\overline v_2 \end{pmatrix}\\
\gamma^1=\begin{pmatrix}1&0\\0&-1\end{pmatrix}, 
\gamma^2=\begin{pmatrix}0&1\\-1&0\end{pmatrix},  
\gamma=\begin{pmatrix}0&1\\1&0\end{pmatrix}
\end{gathered}\end{equation}
\item Type (0,2)
\begin{equation}\label{type02}\begin{gathered}
s=2 ,  V=\C^2, C\begin{pmatrix}v_1\\v_2\end{pmatrix}=\begin{pmatrix}\overline v_2\\-\overline v_1 \end{pmatrix}\\
\gamma^1=\begin{pmatrix}i&0\\0&-i\end{pmatrix}, \gamma^2=\begin{pmatrix}0&1\\-1&0\end{pmatrix},  \gamma=\begin{pmatrix}0&1\\1&0\end{pmatrix}
\end{gathered}\end{equation}
\end{itemize}
\end{examples}

\subsection{Products of Clifford modules}
An explicit construction of gamma matrices for all $(p,q)$ can be given using the product of Clifford modules. Given a Clifford module $\mathcal M_1$ given by $\{\gamma_1^a\}$ for $(p_1,q_1)$, with even $s_1=q_1-p_1$ and chirality operator $\gamma_1$,  and another Clifford module $\mathcal M_2$ given by $\{\gamma_2^b\}$ for $(p_2,q_2)$ (with any value of  $s_2$), then the product $\mathcal M=\mathcal M_1\circ\mathcal M_2$ is defined by the matrices
\begin{equation}\gamma_1^1\otimes 1,\;\gamma_1^2\otimes 1,\;\ldots,\; \gamma_1^{n_1}\otimes 1,\;\gamma_1\otimes \gamma_2^1,\;\gamma_1\otimes \gamma_2^2,\;\dots,\;\gamma_1\otimes \gamma_2^{n_2}
\end{equation}
acting on the space $V_1\otimes V_2$. This is a Clifford module of type $(p,q)=(p_1+p_2,q_1+q_2)$. The product $\mathcal M$ is irreducible if $\mathcal M_1$ and $\mathcal M_2$ are.

A calculation shows that the chirality operator $\gamma$ for the product is determined in terms of the chirality operators $\gamma_1$ and $\gamma_2$ by
\begin{equation}\gamma=
\begin{cases}\gamma_1\otimes \gamma_2 &(s_2 \text{ even})\\
\epsilon_1''\otimes \gamma_2&(s_2 \text{ odd}).
\end{cases}\end{equation}
The Hermitian inner product is defined by 
\begin{equation}(v_1\otimes w_1,v_2\otimes w_2)=(v_1,v_2)(w_1,w_2).\end{equation}
A real structure $C$ for the product is defined by
\begin{equation}C=\begin{cases}
C_1\otimes C_2 &(s_2 \text{ even}, \epsilon_1''=1)\\ 
C_1\otimes C_2\gamma_2 &(s_2 \text{ even}, \epsilon_1''=-1)\\ 
C_1\otimes C_2 &(s_2 \text{ odd}, \epsilon'=1)\\ 
C_1\gamma_1\otimes C_2 &(s_2 \text{ odd}, \epsilon'=-1)\\ 
\end{cases}\end{equation}
A calculation shows that this product operation for both the gamma matrices and the real structure is strictly associative (using the standard identification of vectors or matrices $(u\otimes v)\otimes w=u\otimes (v\otimes w)$), so it is not necessary to use parentheses for multiple products. 

Gamma matrices can be constructed for any $(p,q)$ by taking the product of a number of copies of the $n=2$ case, with the product on the right with an $n=1$  case if necessary. Using this construction, the signs in figure \ref{signtable} can be checked explicitly.

\begin{example}\label{ex:addgamma} If $ \mathcal M_2$ is the type $(1,0)$ module \eqref{type10}, the product amounts to adjoining the chirality operator to the end of the list of generators of $\mathcal M_1$, i.e., $\gamma^{n}=\gamma_1$. (This explains why the chirality operator for $n=4$ was originally called $\gamma^5$.) The new chirality operator is the scalar $\epsilon'$.

Similarly, if  $ \mathcal M_2$ is the type $(0,1)$ module \eqref{type01}, then $\gamma^{n}=-i\gamma_1$, and the new chirality operator is $-\epsilon'$. 

In both cases, $C=C_1$ for $\epsilon'=1$ and $C=C_1\gamma_1$ for $\epsilon'=-1$.
\end{example}

\begin{example}\label{ex:trivialproduct} If $\mathcal M_1$ is the type $(0,0)$ module  \eqref{type00}, then $ \mathcal M_1\circ\mathcal M_2=\mathcal M_2$. This shows that the structure of $\mathcal M_1$ is correct despite the chirality operator having no eigenvalue $-1$.\end{example}

There is also a product operation $\mathcal M_1\bullet \mathcal M_2$ defined when $s_1$ and $s_2$ are both odd. For this, $\mathcal M_1$ must be determined by a larger Clifford module $\widetilde{\mathcal M}_1$ of type $(p_1+1,q_1)$ by ignoring an additional generator $t$, with $t^2=1$. The generators for $\widetilde{\mathcal M}_1$ are
\begin{equation}t, \gamma_1^1,\gamma_1^2,\ldots,\gamma_1^{n_1}
\end{equation}
in that order. The second Clifford module $\mathcal M_2$ is $\{\gamma_2^1,\ldots,\gamma_2^{n_2}\}$ of type $(p_2,q_2)$ as before. Then the product $\widetilde{\mathcal M}_1\circ\mathcal M_2$ is a Clifford module of type $(p_1+p_2+1,q_1+q_2)$. Ignoring the generator $t\otimes 1$ gives the definition of $\mathcal M_1\bullet \mathcal M_2$. The generators are expressed using the chirality operator $\widetilde\gamma_1$ for  $\widetilde{\mathcal M}_1$
\begin{equation}\gamma_1^1\otimes 1,\;\gamma_1^2\otimes 1,\;\ldots,\; \gamma_1^{n_1}\otimes 1,\;
\widetilde\gamma_1\otimes \gamma_2^1,\;\widetilde\gamma_1\otimes \gamma_2^2,\;\dots,\;\widetilde\gamma_1\otimes \gamma_2^{n_2}.
\end{equation}
This is a Clifford module of type $(p,q)=(p_1+p_2,q_1+q_2)$. For irreducible $\widetilde{\mathcal M}_1$ and $\mathcal M_2$, the dimension of $V$, the space these matrices act in, is $2^{(n_1+1)/2}2^{(n_2-1)/2}=2^{(n_1+n_2)/2}$, and so the module is irreducible. 

The matrix $\widetilde\gamma_1$ can be written in terms of the chirality operator $\gamma_1$ for $\mathcal M_1$ and $t$,
\begin{equation}\widetilde\gamma_1=i^{s_1(s_1-1)/2}tP_1=i\epsilon_1't\gamma_1.\end{equation}

A real structure for $\mathcal M_1\bullet \mathcal M_2$  can be defined in terms of the real structure $\widetilde C_1$ for $\widetilde{\mathcal M}_1$  and the real structure $C_2$ for $\mathcal M_2$ as 
\begin{equation}C=\begin{cases}
\widetilde C_1\otimes C_2 &(\epsilon''=1)\\ 
\widetilde C_1 \gamma_1\otimes C_2\gamma_2 &(\epsilon''=-1)\\ 
\end{cases}\end{equation}

The examples  for $n_1=1$ are particularly useful. 
\begin{example} The $\bullet$ product for $(p_1,q_1)=(0,1)$ is constructed using representation \eqref{type11}. Then $s=s_2+1$. The gamma matrices, written in block matrix form, are
\begin{equation}\begin{pmatrix}0&1\\-1&0\end{pmatrix},\quad \begin{pmatrix}0&\gamma_2^a\\\gamma_2^a&0\end{pmatrix}, \quad a=1,\ldots, n_2
\end{equation}
The chirality operator is 
\begin{equation}\gamma=\epsilon''\begin{pmatrix}\gamma_2&0\\0&-\gamma_2\end{pmatrix}
\end{equation}
and the real structure is
\begin{equation}C=\begin{cases}\begin{pmatrix}C_2&0\\0&C_2\end{pmatrix}&(\epsilon''=1)\\
\begin{pmatrix}0&C_2\gamma_2\\-C_2\gamma_2&0\end{pmatrix}&(\epsilon''=-1).\end{cases}\end{equation}
\end{example}

\begin{example} \label{ex:product10} The $\bullet$ product for $(p_1,q_1)=(1,0)$ is constructed using representation \eqref{type20}. Then $s=s_2-1$. The gamma matrices, written in block matrix form, are
\begin{equation}\begin{pmatrix}0&1\\1&0\end{pmatrix},\quad \begin{pmatrix}0&i\gamma_2^a\\-i\gamma_2^a&0\end{pmatrix}, \quad a=1,\ldots, n_2
\end{equation}
The chirality operator is 
\begin{equation}\gamma=-\epsilon''\begin{pmatrix}\gamma_2&0\\0&-\gamma_2\end{pmatrix}
\end{equation}
and the real structure is
\begin{equation}C=\begin{cases}\begin{pmatrix}C_2&0\\0&C_2\end{pmatrix}&(\epsilon''=1)\\
\begin{pmatrix}0&C_2\gamma_2\\C_2\gamma_2&0\end{pmatrix}&(\epsilon''=-1).\end{cases}\end{equation}
\end{example}

\section{Finite spectral triples}\label{sec:spectral}

\subsection{Axioms for a real spectral triple}\label{sec:axioms}
\begin{definition}
A finite real spectral triple consists of
\begin{enumerate}\item A integer $s$ defined modulo 8, called the KO-dimension.\label{ax:ko}
\item A finite-dimensional Hilbert space $\mathcal H$, with Hermitian inner product $\langle\cdot,\cdot\rangle$. 
\item An algebra $\mathcal A$ over $\R$ with an anti-automorphism $*\colon \mathcal A\to\mathcal A$  that is involutory (i.e., $(ab)^*=b^*a^*$ and $a^{**}=a$ for all $a,b\in\mathcal A$).  This is the non-commutative generalisation of the algebra of coordinates.
\item A faithful action  $\rho\colon\mathcal A\to\End(\mathcal H)$ such that $\rho(a^*)=\rho(a)^*$, the adjoint of $\rho(a)$, for all $a\in\mathcal A$.
\item An operator $\Gamma\colon \mathcal H\to \mathcal H$ satisfying $\Gamma^*=\Gamma$, $\Gamma^2=1$. This is called the chirality operator.\label{ax:chiral1}
\item $\Gamma \rho(a)=\rho(a)\Gamma$ for all $a\in\mathcal A$. \label{ax:chiral2}
\item An anti-linear operator $J\colon\mathcal H\to \mathcal H$ that is unitary, $\langle Ju,Jv\rangle=\langle v,u\rangle$ for all $u,v\in\mathcal H$. This operator is called the real structure.
\item $J^2=\epsilon$, $J\Gamma=\epsilon''\Gamma J$, using the signs in figure \ref{signtable}.\label{ax:real}
\item $[\rho(a),J\rho(b)J^{-1}]=0$ for all $a,b\in\mathcal A$.\label{ax:bimodule}
\item An operator $D\colon \mathcal H\to \mathcal H$ satisfying $D=D^*$, called the Dirac operator.\label{ax:dirac}
\item $D\Gamma=-\Gamma D$ if $s$ is even; $D\Gamma=\Gamma D$ if $s$ is odd.\label{ax:Dgamma}
\item $JD=\epsilon'DJ$.\label{ax:epsilonprime}
\item $[[D,\rho(a)],J\rho(b)J^{-1}]=0$ for all $a,b\in\mathcal A$.\label{ax:firstorder}
\end{enumerate}
\end{definition}
These axioms are a subset of the axioms given in \cite{Connes:1996gi} and are satisfied by a Dirac operator on a (commutative) manifold. The axioms coincide with those given in  \cite{Chamseddine:2006ep} for a finite-dimensional $\mathcal H$, except that the stronger condition $\Gamma=1$ for $s$ odd is used there. This is not a significant difference. 

The Hilbert space space splits into the $\pm1$ eigenspaces of $\Gamma$ as $\mathcal H=\mathcal H_+\oplus\mathcal H_-$. In the case where $s$ is odd, the operator $\Gamma$ commutes with all the operators defining the real spectral triple, and so $\mathcal H_+$ and $\mathcal H_-$ are also spectral triples on which $\Gamma=1$ and $\Gamma=-1$ respectively.

\begin{definition}
The above data without a Dirac operator, i.e., $(s, \mathcal H,\mathcal A,\Gamma, J)$ satisfying axioms \ref{ax:ko}--\ref{ax:bimodule}, is called a fermion space. The set of all Dirac operators $\mathcal G=\{D\}$ is called the space of geometries for the fermion space.
\end{definition}

The space of geometries $\mathcal G$ is a vector space over $\R$, as follows from the linearity of axioms \ref{ax:dirac}--\ref{ax:firstorder}. For $s=3\text{ or } 7$, any multiple of the identity operator is always a Dirac operator.

The axioms imply that $\mathcal H$ has a left and a right action of $\mathcal A$ that commute with each other, making $\mathcal H$ a bimodule over $\mathcal A$. The actions are
\begin{equation}a\triangleright\psi\triangleleft b=\rho(a)J\rho(b)^*J^{-1}\psi.
\end{equation}

\subsection{The index}
For $s$ even, a fermion space has an index, defined as
\begin{equation}I=\tr\Gamma=\dim\mathcal H_+-\dim\mathcal H_-.\end{equation}
For the cases $s=2,6$ if $\psi$ is an eigenvector of $\Gamma$ then  $J\psi$ is an eigenvector with the opposite eigenvalue and so $I=0$. However this is argument does not hold for $s=0,4$. In fact for these cases, $\mathcal H_+$ and $\mathcal H_-$ are both fermion spaces, and clearly have non-zero index (unless the dimension is zero).

 The Dirac operator can be regarded as a pair of operators $D_+\colon \mathcal H_+\to\mathcal H_-$ and $D_-\colon \mathcal H_-\to\mathcal H_+$, with $D_-=D_+^*$. The kernel $\ker A$ of an operator $A$ is the eigenspace for eigenvalue $0$. The kernel of $D$ is an invariant subspace for $\Gamma$ and the index of the Dirac operator is defined as
\begin{equation}I_D=\tr_{\ker D}\Gamma=\dim\ker D_+-\dim\ker D_-.\end{equation}
In fact, this index depends only on the fermion space.
\begin{lemma}  $I_D=I$.
\end{lemma}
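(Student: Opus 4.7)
The plan is to use the anticommutation relation $D\Gamma = -\Gamma D$ (which holds since $s$ is even, by axiom \ref{ax:Dgamma}) to decompose $D$ with respect to the chirality splitting, and then apply the rank-nullity theorem together with self-adjointness.

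First I would observe that the anticommutation $D\Gamma = -\Gamma D$ implies that $D$ swaps the eigenspaces $\mathcal H_+$ and $\mathcal H_-$ of $\Gamma$: if $\Gamma \psi = \pm \psi$ then $\Gamma D\psi = -D\Gamma\psi = \mp D\psi$. So $D$ restricts to operators $D_+ \colon \mathcal H_+ \to \mathcal H_-$ and $D_- \colon \mathcal H_- \to \mathcal H_+$ with $D = D_+ \oplus D_-$ in block form. Since $D = D^*$ (axiom \ref{ax:dirac}) and the splitting $\mathcal H = \mathcal H_+\oplus\mathcal H_-$ is orthogonal (as $\Gamma$ is self-adjoint), taking adjoints exchanges these off-diagonal blocks, giving $D_- = D_+^*$.

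Next I would apply the rank-nullity theorem separately to each block:
\begin{equation}
\dim\mathcal H_+ = \dim\ker D_+ + \text{rank}\, D_+, \qquad
\dim\mathcal H_- = \dim\ker D_- + \text{rank}\, D_-.
\end{equation}
Since $D_- = D_+^*$, a standard linear algebra fact gives $\text{rank}\, D_- = \text{rank}\, D_+$ (concretely, $D_+$ restricts to an isomorphism from $(\ker D_+)^\perp \subset \mathcal H_+$ onto $(\ker D_-)^\perp \subset \mathcal H_-$, using that $\text{image}(D_+)$ is the orthogonal complement of $\ker D_+^* = \ker D_-$). Subtracting the two equations then yields
\begin{equation}
\dim\mathcal H_+ - \dim\mathcal H_- = \dim\ker D_+ - \dim\ker D_-,
\end{equation}
which is exactly $I = I_D$ once one notes that $\ker D = \ker D_+ \oplus \ker D_-$ is $\Gamma$-invariant and $\Gamma$ acts as $+1$ on $\ker D_+$ and $-1$ on $\ker D_-$.

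There is no real obstacle here — the result is essentially the finite-dimensional McKean–Singer identity, and none of the deeper structure (the real structure $J$, the algebra action, the order-one condition) plays any role. The only subtlety worth flagging is the implicit assumption that $s$ is even, since the definition of $I$ as $\tr\Gamma$ is only given in that case; for $s$ odd, $\Gamma$ commutes with $D$ and the statement would require reinterpretation, but this case is excluded by the context of the lemma.
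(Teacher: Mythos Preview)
Your proof is correct, but it takes a different route from the paper's. The paper decomposes $\mathcal H$ according to the \emph{sign of the eigenvalues of $D$}: writing $\mathcal H = \mathcal H^{<}\oplus\ker D\oplus\mathcal H^{>}$, the anticommutation $D\Gamma=-\Gamma D$ shows that $\Gamma$ swaps $\mathcal H^{>}$ and $\mathcal H^{<}$, so $\tr\Gamma$ vanishes on $\mathcal H^{>}\oplus\mathcal H^{<}$ and only the kernel contributes. You instead decompose along the \emph{chirality} $\mathcal H=\mathcal H_+\oplus\mathcal H_-$ and use rank--nullity together with $\operatorname{rank}D_+=\operatorname{rank}D_+^*$. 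The paper's argument is a one-line spectral pairing (each nonzero eigenvalue $\lambda$ is matched by $-\lambda$ under $\Gamma$), while yours is the standard algebraic McKean--Singer computation; yours avoids diagonalising $D$ but requires the extra observation about ranks of adjoints. Both are equally valid in finite dimensions and rely only on $D=D^*$ and $D\Gamma=-\Gamma D$.
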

\begin{proof}The Hilbert space can be decomposed into the subspaces where the eigenvalue of $D$ is negative, zero or positive, $\mathcal H=\mathcal H^<\oplus\ker D\oplus\mathcal H^>$. Since $\Gamma$ maps $\mathcal H^>\to\mathcal H^<$ and  $\mathcal H^<\to\mathcal H^>$, its trace on $\mathcal H^<\oplus\mathcal H^>$ is zero.
\end{proof}

\subsection{The Frobenius form}

Since $\mathcal A$ is a $*$-algebra of matrices, it follows that it is semisimple, and is thus isomorphic to a direct sum of simple matrix algebras, $\mathcal A\cong\oplus_i \mathcal A_i$. According to the Artin-Wedderburn theorem, each simple summand is isomorphic to an algebra of $n\times n$ matrices over a division ring, $\mathcal A_i\cong M(n_i,\mathbb D_i)$, with $\mathbb D_i=\R$, $\C$ or $\HH$, regarded as an algebra over $\R$. 

Since the algebra $\mathcal A$ is semisimple, it can be given the structure of a Frobenius algebra. This is a choice of 
 a linear form $\phi\colon\mathcal A\to \R$ such that the bilinear form $(a,b)\mapsto\phi(ab)$ is non-degenerate. The form is called a Frobenius form, and an algebra with a Frobenius form is called a Frobenius algebra.

The matrix trace on $M(n,\mathbb D)$ is denoted by $\tr$ and gives an element of $\mathbb D$. This can be used to construct Frobenius forms.  In fact there is a canonical choice of a Frobenius form. Writing $a=\oplus a_i$ in the decomposition into matrix algebras, the canonical form is
\begin{equation}\phi_0(a)=\sum_in_i\dim \mathbb D_i\;\Re\tr(a_i),
\end{equation}
using the notation $\dim \mathbb D$ for the dimension of the real algebra $\mathbb D$, so that $\dim \R=1$,  $\dim \C=2$, $\dim \HH=4$. A standard result \cite{KochBook} is that any other Frobenius form can be written 
\begin{equation}\label{frobenius}\phi(a)=\phi_0(xa)\end{equation}
 for some invertible element $x\in\mathcal A$.

The bilinear form has an inverse,  $B=\sum B_L\otimes B_R\in\mathcal A\otimes \mathcal A$ defined by
\begin{equation}\label{eq:Bdef}\sum \phi(a B_L)B_R=a=\sum B_L\phi(B_Ra)
\end{equation}
for all $a\in\mathcal A$. This has the property 
\begin{equation}\label{eq:Bcentral}\sum a B_L\otimes B_R=\sum  B_L\otimes B_R\,a
\end{equation}
This is proved by noting that $ \sum \phi(ba B_L)\otimes B_R=ba=\sum  \phi(b B_L)\otimes B_R\,a$ for all $b\in\mathcal A$.
Applying the algebra multiplication to $B$ results in $z=\sum  B_L B_R$ which is a central element of $\mathcal A$ according to \eqref{eq:Bcentral}. If $z=1$, $\phi$ is called a special Frobenius form. 

Some further properties a Frobenius form might have are $*$-invariance 
\begin{equation}\phi(a^*)=\phi(a)\end{equation}
and symmetry
\begin{equation}\phi(ab)=\phi(ba).\end{equation}

The following is proved in \cite{Barrett:2013kza}.
\begin{lemma} The canonical form $\phi_0$ is a Frobenius form that is special, $*$-invariant 
 and symmetric. 
\end{lemma}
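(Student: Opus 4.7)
The plan is to reduce all four properties to the case of a single simple summand using the direct-sum structure $\mathcal{A}\cong\oplus_i\mathcal{A}_i$ with $\mathcal{A}_i\cong M(n_i,\mathbb{D}_i)$. Since products across distinct summands vanish and $\phi_0$ is by definition a block sum, the bilinear form $(a,b)\mapsto\phi_0(ab)$ is block diagonal; hence non-degeneracy, symmetry and the special property $\sum B_L B_R=1$ all reduce to the corresponding statements for a single factor $M(n,\mathbb{D})$ equipped with the form $a\mapsto n\dim\mathbb{D}\,\Re\tr(a)$. The property of $*$-invariance requires more care because $*$, as an anti-automorphism, permutes the simple ideals and need not fix them individually.

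I would first establish symmetry and non-degeneracy on a single summand $M(n,\mathbb{D})$. Symmetry is the identity $\Re\tr(AB)=\Re\tr(BA)$, which follows from the cyclic property of the matrix trace combined with $\Re(xy)=\Re(yx)$ on $\mathbb{D}$ (trivial for $\R$ and $\C$; immediate from a direct check on the basis $\{1,i,j,k\}$ for $\HH$). Non-degeneracy follows by producing an explicit dual basis: for the $\R$-basis of $M(n,\mathbb{D})$ consisting of matrix units multiplied by a basis of $\mathbb{D}$ over $\R$, the pairing under $n\dim\mathbb{D}\,\Re\tr$ is seen by direct computation to be invertible.

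For the special property I exhibit $B$ in closed form. In the case $\mathbb{D}=\R$, the element $B=\tfrac{1}{n}\sum_{i,j}e_{ij}\otimes e_{ji}$ satisfies the defining relation \eqref{eq:Bdef}, and then
\begin{equation*}
\sum B_L B_R=\frac{1}{n}\sum_{i,j}e_{ij}e_{ji}=\frac{1}{n}\sum_{i,j}e_{ii}=1.
\end{equation*}
For $\mathbb{D}=\C$ and $\HH$ I would use a basis of $\mathbb{D}$ orthonormal with respect to $(d,d')\mapsto\Re(\bar d d')$; the dual contributions coming from the different real components of a single matrix unit $e_{ij}$ collapse to the same term $e_{ii}$, and the counting again yields $\sum B_L B_R=1$.

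For $*$-invariance I split into two cases. If $*$ preserves a simple summand $\mathcal{A}_i\cong M(n,\mathbb{D})$, then its restriction is an $\R$-linear anti-involution of this matrix algebra; the classification of such anti-involutions (via Skolem--Noether) shows that after an inner conjugation it becomes the standard Hermitian transpose, possibly composed with a Galois involution of $\mathbb{D}$ over $\R$, and $\Re\tr$ is invariant under each of these operations. If $*$ exchanges two summands $\mathcal{A}_i$ and $\mathcal{A}_j$, then $\mathcal{A}_i\cong\mathcal{A}_j^{\mathrm{op}}$ as $\R$-algebras, which forces $n_i=n_j$ and $\mathbb{D}_i=\mathbb{D}_j$, so the common coefficient $n_i\dim\mathbb{D}_i$ ensures the $*$-swap of their contributions leaves $\phi_0$ invariant. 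The bookkeeping in this last step is the main obstacle, since controlling the composition of the inner-automorphism factor with a possible division-ring Galois action requires care; the other three properties reduce to short explicit calculations with matrix units.
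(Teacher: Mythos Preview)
The paper does not actually prove this lemma; it simply cites \cite{Barrett:2013kza}. Your direct argument therefore supplies more than the paper does, and its outline is correct: symmetry, non-degeneracy and the special property all reduce cleanly to a single simple factor, and your matrix-unit computations for these three are accurate (the normalisation $n\dim\mathbb D$ is exactly what makes $\sum B_LB_R=1$ come out right in each of the three cases $\mathbb D=\R,\C,\HH$).

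The one place you flag as the main obstacle, the $*$-invariance, can be handled without classifying anti-involutions or tracking how $*$ permutes the simple ideals. The point is that $\phi_0(a)$ equals the trace of left multiplication $L_a\colon\mathcal A\to\mathcal A$ regarded as an $\R$-linear map; your own matrix-unit count verifies this on each $M(n,\mathbb D)$, and this description is manifestly independent of any chosen Artin--Wedderburn splitting. From it, symmetry is immediate: $\phi_0(ab)=\tr_\R(L_aL_b)=\tr_\R(L_bL_a)=\phi_0(ba)$. For $*$-invariance, observe that $L_{a^*}=*\circ R_a\circ *^{-1}$ with $R_a$ right multiplication, so $\phi_0(a^*)=\tr_\R(R_a)$; the same matrix-unit count (or the isomorphism of the left and right regular representations of a semisimple algebra) gives $\tr_\R(R_a)=\tr_\R(L_a)$. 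This replaces your Skolem--Noether case analysis by a two-line argument and sidesteps the bookkeeping you were worried about.
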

By picking suitable $x$, then using \eqref{frobenius} one can generate all Frobenius forms with a subset of these three properties. If $x=x^*$, the form is $*$-invariant. If $x$ is central then the form is symmetric. If $\tr(x_i^{-1})=n_i$ for each $i$, then the form is special.

\subsection{The Dirac operator}
 For finite spectral triples, the structure of a Dirac operator is given by the following lemma, generalising the result given in \cite{Krajewski:1996se}. The lemma uses a choice of a special *-invariant Frobenius form.

\begin{lemma}\label{lem:dirac} A Dirac operator for a finite spectral triple has the form 
\begin{equation}\label{eq:dirac}D=\theta+\epsilon'J\theta J^{-1}.\end{equation}
with $\theta$ an operator on $\mathcal H$ satisfying 
\begin{itemize}\item $\theta^*=\theta$ 
\item $[\theta,J\rho(a)J^{-1}]=0$ for all $a\in\mathcal A$
\item $\theta\Gamma+\Gamma\theta=0$ ($s$ even), $\theta\Gamma-\Gamma\theta=0$ ($s$ odd) .
\end{itemize}
 Conversely, given a fermion space, any such operator $\theta$ defines a Dirac operator by \eqref{eq:dirac}.\end{lemma}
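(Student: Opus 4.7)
The lemma splits into an easy converse—that any $\theta$ with the listed properties yields a valid Dirac operator $D = \theta + \epsilon'J\theta J^{-1}$—and the harder structural claim that any $D$ arises this way. I would tackle the converse first, as it reduces to checking the axioms one by one. Self-adjointness follows from $\theta^*=\theta$ and the identity $(J\theta J^{-1})^* = J\theta^* J^{-1}$ for anti-unitary $J$. The twisted equivariance $JDJ^{-1}=\epsilon'D$ is immediate from $J^2=\epsilon$ with $\epsilon^2=(\epsilon')^2=1$. The $\Gamma$-parity axiom combines the assumed anti/commutation of $\theta$ with $\Gamma$ with the identity $J\Gamma = \epsilon''\Gamma J$, using a short case check against the sign table. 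For the first-order condition, since $\theta$ commutes with $J\rho(a)J^{-1}$, its $J$-conjugate $J\theta J^{-1}$ commutes with $\rho(a)$; hence $[D,\rho(a)] = [\theta,\rho(a)]$, and both $\theta$ and $\rho(a)$ commute with $J\rho(b)J^{-1}$, so their commutator does too.

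For the structural claim, I would use the special $*$-invariant Frobenius form $\phi$ with Casimir $B = \sum B_L \otimes B_R$ to define an averaging projector
$$\pi(X) = \sum J\rho(B_L)J^{-1}\, X\, J\rho(B_R)J^{-1}.$$
Using the centrality identity \eqref{eq:Bcentral} transferred to the right action, together with the special condition $\sum B_L B_R = 1$, one verifies that $\pi$ is a projection onto the commutant of the right action $J\rho(\mathcal A)J^{-1}$. Set $\alpha := \pi(D)$, which commutes with the right action. The first-order condition then forces $\beta := D - \alpha$ to commute with the left action, because $[\pi(D),\rho(a)] = \pi([D,\rho(a)]) = [D,\rho(a)]$: the first equality uses that $J\rho(B_L)J^{-1}$ and $J\rho(B_R)J^{-1}$ commute with $\rho(a)$ by axiom \ref{ax:bimodule}, and the second uses that $[D,\rho(a)]$ already lies in the right-action commutant by axiom \ref{ax:firstorder}. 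Applying $JDJ^{-1}=\epsilon'D$ and comparing the right- and left-commuting parts of both sides, I obtain $\beta = \epsilon'J\alpha J^{-1} + z$ for an element $z$ in the intersection of the left and right commutants satisfying $JzJ^{-1}=\epsilon'z$ (the equivariance follows by substituting the formula for $z$ back into the defining identity). Set $\theta = \alpha + \tfrac12 z$. A direct calculation shows $\theta + \epsilon'J\theta J^{-1} = D$, and $\theta$ lies in the right-action commutant since both $\alpha$ and $z$ do. Self-adjointness of $\theta$ and its $\Gamma$-parity inherit from those of $D$ via the $*$-invariance of $\phi$ (which gives $\pi(X^*)=\pi(X)^*$) and the fact that $\Gamma$ commutes with the right action (so that $\pi$ commutes with $\Gamma$-conjugation).

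The main obstacle is establishing that $\pi$ genuinely projects onto the right-action commutant—this depends critically on the special and centrality properties of $\phi$—and verifying that the central correction $z$ has precisely the $\epsilon'$-equivariance needed for it to be absorbed symmetrically into $\theta$ and $\epsilon'J\theta J^{-1}$. Both points are bookkeeping with the Frobenius data, but once in place the construction of $\theta$ from $D$ is essentially linear algebra on the bimodule $\mathcal H$. The sign identities $\epsilon^2=(\epsilon')^2=(\epsilon'')^2=1$ from figure \ref{signtable} make the verifications uniform across all KO-dimensions $s$.
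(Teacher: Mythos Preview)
Your proposal is correct and follows essentially the same strategy as the paper: both use the Frobenius averaging operator built from the Casimir $B$ to project $D$ onto the commutant of the right action, invoke the first-order axiom to show the remainder lies in the left-action commutant, and then split the central overlap symmetrically to obtain $\theta$. The only organisational difference is that the paper introduces both projections $\pi$ and $\pi'$ and derives the decomposition $D=\pi D+\pi'D-\pi\pi'D$ before setting $\theta=\pi'D-\tfrac12\pi\pi'D$, whereas you use a single projection and extract the central correction $z$ from the equivariance $JDJ^{-1}=\epsilon'D$; the resulting $\theta$ is the same (your $z$ equals $-\pi\pi'D$).
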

Note that the second condition means that $\theta$ commutes with the right action of $\mathcal A$. Then $J\theta J^{-1}$ commutes with the left action.
\begin{proof} Let $D$ be a Dirac operator for a finite spectral triple.

Let $\phi$ be a special Frobenius form for $\mathcal A$, such that $\phi(a)=\phi(a^*)$. Denote the inverse of $\phi$ by $B$. If $K\colon\mathcal H\to\mathcal H$ is an operator, then define operators 
\begin{equation}\pi K=\sum  \rho(B_L)K \rho(B_R)\end{equation}
and
\begin{equation}\pi' K=J\pi(J^{-1}KJ)J^{-1}=\sum J \rho(B_L)J^{-1}KJ \rho(B_R)J^{-1}.\end{equation}
Due to \eqref{eq:Bcentral}, $\pi K$ commutes with the left action and  $\pi' K$ commutes with the right action, $[\pi K,\rho(a)]=[\pi'K,J\rho(a)J^{-1}]=0$ for all $a\in\mathcal A$. Then $\pi^2K=\pi K, \pi'^2K=\pi'K, \pi'\pi K=\pi\pi'K$.
From \eqref{eq:Bdef}, $a^*=\sum B_R^*\phi(B_L^*a^*)$ for all $a^*$, so that $B=\sum B_R^*\otimes B_L^*$.
 Hence 
\begin{equation}(\pi K)^*=\sum  \rho(B_R^*)K^* \rho(B_L^*)=\pi(K^*)\end{equation}

The first-order condition, axiom \ref{ax:firstorder}, implies that $\pi'[D,\rho(a)]=[D,\rho(a)]$ for all $a\in\mathcal A$. Hence $[\pi'D-D,\rho(a)]=0$, and this implies $\pi(\pi'D-D)=\pi'D-D$, which re-arranges to give a decomposition
\begin{equation}D=\pi D+\pi' D-\pi\pi'D.
\end{equation}
Define
\begin{equation}\label{eq:theta}\theta=\pi'D-\frac12\pi\pi'D
\end{equation}
Then $\epsilon'J\theta J^{-1}=\pi D-\frac12\pi'\pi D$, which proves \eqref{eq:dirac}. 

For the properties of $\theta$, note that for any $K$, $(JKJ^{-1})^*=JK^*J^{-1}$. Therefore $(\pi' K)^*=\pi'(K^*)$ and so $\theta^*=\theta$. The property $[\theta,J\rho(a)J^{-1}]=0$ follows from \eqref{eq:Bcentral} and the commutation with $\Gamma$ follows from axioms \ref{ax:chiral2}, \ref{ax:real} and \ref{ax:Dgamma}.

For the converse, suppose $\theta$ is an operator satisfying the three given conditions. Then defining $D$ by  \eqref{eq:dirac}, axioms \ref{ax:dirac}--\ref{ax:firstorder} are easily verified.
\end{proof}

The decomposition of the Dirac operator is into the part $\theta$ that commutes with the right action and the part $J\theta J^{-1}$ that commutes with the left action. Therefore one expects that if the Dirac operator is allowed a part that commutes with both left and right actions (a bimodule map), then the decomposition may not be unique. 

\begin{lemma}\label{lem:ambiguity}
Suppose $\theta_1$ and $\theta_2$ satisfy the conditions of lemma \ref{lem:dirac} for a fixed Dirac operator $D$. Then $\psi=\theta_1-\theta_2$ satisfies
\begin{itemize}\item $\psi^*=\psi$ 
\item $[\psi,\rho(a)]=0$  for all $a\in\mathcal A$
\item $[\psi,J\rho(a)J^{-1}]=0$ for all $a\in\mathcal A$
\item $\psi\Gamma+\Gamma\psi=0$ ($s$ even), $\psi\Gamma-\Gamma\psi=0$ ($s$ odd) .
\item $\psi+\epsilon'J\psi J^{-1}=0$
\end{itemize}
\end{lemma}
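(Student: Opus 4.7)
The plan is to verify the five properties one at a time. Properties 1, 3, and 4 follow immediately by linearity of the conditions in Lemma \ref{lem:dirac}: Hermiticity, commutation with $J\rho(a)J^{-1}$, and the graded commutation with $\Gamma$ all pass to differences. Property 5 is obtained by subtracting the two decompositions of the same Dirac operator: since $D=\theta_1+\epsilon'J\theta_1J^{-1}=\theta_2+\epsilon'J\theta_2J^{-1}$, the difference gives $\psi+\epsilon'J\psi J^{-1}=0$ (the map $K\mapsto JKJ^{-1}$ is real-linear even though $J$ is antilinear).

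The only substantive step is property 2, the commutation with the left action $\rho(a)$. Here I would use property 5 to trade $\psi$ for $J\psi J^{-1}$, then invoke property 3. Starting from $\psi\,(J\rho(a)J^{-1})=(J\rho(a)J^{-1})\,\psi$ (property 3), conjugate both sides by $J$. Because $J\rho(a)J^{-1}$ is a linear operator and $J^2=\epsilon\in\{\pm1\}$ is a central scalar, one has $J(J\rho(a)J^{-1})J^{-1}=\rho(a)$, so conjugation yields $[J\psi J^{-1},\rho(a)]=0$. Combined with property 5 in the form $J\psi J^{-1}=-\epsilon'\psi$, this gives $[\psi,\rho(a)]=0$ as required.

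The main place where care is needed is in this last step, since $J$ is antilinear and one must check that conjugating the commutator relation by $J$ still produces a commutator rather than some twisted expression. This works because the operator $J\rho(a)J^{-1}$ being conjugated is linear and because $J^2$ is the scalar $\epsilon$, so the two factors of $\epsilon$ from $J^2$ and $J^{-2}$ cancel. Once this is in hand the lemma is complete, and no other calculations are needed.
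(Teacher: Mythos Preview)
Your proof is correct. The paper states this lemma without proof, treating it as a routine verification; your argument is exactly the natural one, with the only nontrivial item being property~2, which you correctly derive by combining property~5 with the $J$-conjugate of property~3.
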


In particular, $\psi$ is a bimodule map. For some fermion spaces there are no non-zero bimodule maps respecting the chirality operator and so the decomposition is unique. This is the situation in \cite{Krajewski:1996se}, where an additional axiom (orientability) rules out non-zero bimodule maps. However it was noted in \cite{Stephan:2006em} that there are interesting examples that allow non-zero bimodule maps and so the orientability axiom has been dropped in the more recent literature.

The role of the Frobenius form in determining a unique $\theta$ is as follows. Given a Dirac operator $D$, suppose  that $\theta_2=\theta$ is determined by \eqref{eq:theta} using a Frobenius form with projection operators $\pi$, $\pi'$. Then for any $\theta_1$ that also gives $D$ via \eqref{eq:dirac},
\begin{equation}\label{eq:psifrob}\psi=\theta_1-\theta=\frac12\left(\pi\theta_1-\epsilon'J(\pi\theta_1)J^{-1}\right),\end{equation}
for which the proof is below.  
It follows that $\theta$ is the unique operator satisfying  the conditions of lemma \ref{lem:dirac} and also
\begin{equation}\pi\theta=\epsilon'J(\pi\theta)J^{-1}.\end{equation}

\begin{proof}[Proof of \eqref{eq:psifrob}]
Write $D=\theta_1+\epsilon'J\theta_1 J^{-1}$. Then
\begin{equation} \pi'D=\theta_1+\epsilon'\pi'(J\theta_1 J^{-1})=\theta_1+\epsilon' J(\pi\theta_1) J^{-1}
\end{equation}
\begin{equation} \pi \pi'D=\pi\theta_1+\epsilon' J(\pi\theta_1) J^{-1}
\end{equation}
which leads directly to  \eqref{eq:psifrob}.
\end{proof}

\subsection{Transformations}\label{sec:transf}
A transformation is a unitary operator $U\colon\mathcal H\to\mathcal H$ 
that respects the structures of the fermion space
\begin{itemize}
\item $U\rho(a)U^{-1}=\rho(b)$ for some $b\in \mathcal A$, for all $a\in\mathcal A$
\item $U\Gamma=\Gamma U$
\item $UJ=JU$
\end{itemize}
Note that the first condition implies that $U$ determines an automorphism of the algebra $\mathcal A$.

A transformation takes a Dirac operator $D$ to another Dirac operator $D'$ with the same spectrum by the formula
\begin{equation}\label{eq:transD} D'=U D U^{-1}\end{equation}
This determines a linear map on the space of Dirac operators $\mathcal G$. A symmetry is a transformation $U$ such that $D'=D$. 

An example of a transformation is given by a unitary element $u\in\mathcal A$. Then the transformation is $U=\rho(u)J\rho(u) J^{-1}$. This example has the interpretation as a gauge transformation in the example of the standard model spectral triple.

 For  $s$ even, the chiral rotation operator is
\begin{equation} R=e^{-i\pi\Gamma/4}.\end{equation}
This operator obeys $R^2=-i\Gamma$.  The modified Dirac operator $\widetilde D$ is defined as
\begin{equation}\label{eq:modified}\widetilde D=R D R^{-1}=R^2D=-i\Gamma D.\end{equation}
For $s=2$ or $6$ (the cases where $\epsilon''=-1$) this is a transformation, since $RJ=JR$, and $\widetilde D$ is also a Dirac operator. For $s=0$ or $s=4$, one has $RJ=R^{-1}J$ and so this is not a transformation of the fermion space and  $\widetilde D$ is not a Dirac operator.

The structure of transformations for fuzzy spaces is considered further in  \S\ref{sec:fuzzytrans}.

\section{Matrix geometries}\label{sec:matrixgeometry}

A matrix geometry is a spectral triple with a fermion space $(s, \mathcal H,\mathcal A,\Gamma, J)$ that is a product of a type $(0,0)$ matrix geometry with a Clifford module. Thus the type $(0,0)$ case is defined first.

\begin{definition} A type $(0,0)$ matrix geometry is a real spectral triple with KO-dimension $s_0=0$ and chirality operator $\Gamma_0=1$, together with the only possible Dirac operator, $D_0=0$, and a real structure $J_0$. The Hilbert space $\mathcal H_0$ decomposes into a direct sum of irreducible bimodules over the algebra $\mathcal A_0$. It is required that these irreducible bimodules are all inequivalent. 
\end{definition}

The name `matrix geometry' arises from the fact that examples of this definition are given by a matrix construction. In fact, this construction turns out to be the general case, up to isomorphism.

\begin{example}\label{ex:matrixgeom}
Let $\C^n$, with its standard Hermitian inner product, be a faithful left module for $*$-algebra $\mathcal A_0$ such that the irreducible sub-modules are all inequivalent to each other. Let $\mathcal H_0$ be a $\C$-linear vector subspace of $M(n,\C)$
such that matrix product $am\in \mathcal H_0$ and $m^*\in \mathcal H_0$ for all $a\in\mathcal A_0$ and $m\in \mathcal H_0$. Then the representation of $\mathcal A_0$ in $\mathcal H_0$ is just the matrix multiplication, $\rho_0(a)m=am$.

The Hermitian inner product on $\mathcal H_0$ given by $\langle m,m'\rangle=\tr m^*m'$ makes $\mathcal H_0$ a Hilbert space.  The real structure on $\mathcal H_0$ is given by Hermitian conjugation, $J_0=*$. 
The action of $a^*$ can be computed using the inner product
\begin{equation}
\langle  m,a m'\rangle
=\tr m^*am'
=\tr ((a^*m)^*m')=\langle a^* m, m'\rangle
\end{equation}
so that $a^*$ indeed acts in $\mathcal H_0$ as the adjoint of $a$. The unitarity of $J_0$ follows from 
\begin{equation}\langle J_0m,J_0m'\rangle=\tr mm'^*=\tr m'^*m=\langle m',m\rangle\end{equation}
The right action is $J_0^{\vphantom{-1}}a^*J_0^{-1}m=J_0a^*m^*=ma$, which is matrix multiplication on the right, and commutes with the left action. 
Thus this construction gives a type $(0,0)$ matrix geometry. 
\end{example}

\begin{lemma} A type $(0,0)$ matrix geometry is isomorphic to one constructed in example \ref{ex:matrixgeom}.
\end{lemma}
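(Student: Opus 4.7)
The approach I have in mind is to reconstruct the ambient matrix algebra $M(n,\C)$ from the abstract bimodule structure of $\mathcal H_0$ and then exhibit $\mathcal H_0$ as a Hermitian-conjugation invariant subspace. First I would decompose $\mathcal H_0$ into irreducible bimodules over $\mathcal A_0$. Since $\mathcal A_0$ is a finite-dimensional semisimple real $*$-algebra, the enveloping algebra $\mathcal A_0\otimes_\R \mathcal A_0^{\mathrm{op}}$ is also semisimple and its irreducible complex modules are precisely the tensor products $V_i\otimes V_j^*$, where $V_i,V_j$ range over the irreducible complex left $\mathcal A_0$-modules. The hypothesis yields a decomposition
\[
\mathcal H_0 \;=\; \bigoplus_{(i,j)\in S} H_{ij}, \qquad H_{ij}\cong V_i\otimes V_j^*,
\]
with each irreducible appearing at most once. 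Since $J_0$ is antilinear and swaps the left and right actions, $J_0(H_{ij})\subset H_{ji}$, so $S$ is symmetric. Let $I=\{i:V_i \text{ appears in some }H_{ij}\}$; equip each $V_i$ ($i\in I$) with a Hermitian inner product making the $\mathcal A_0$-action $*$-compatible, and form $\C^n=\bigoplus_{i\in I}V_i$. Faithfulness of $\rho_0$ ensures $\mathcal A_0$ acts faithfully on $\C^n$, and by construction the irreducible submodules of $\C^n$ are pairwise inequivalent, so $(\mathcal A_0,\C^n)$ satisfies the hypotheses of Example~\ref{ex:matrixgeom}.

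Next I would build a unitary bimodule isomorphism $\Phi:\mathcal H_0\to\Phi(\mathcal H_0)\subset M(n,\C)=\bigoplus_{i,j\in I}V_i\otimes V_j^*$ intertwining $J_0$ with matrix adjoint. On each summand choose a bimodule isomorphism $\phi_{ij}:H_{ij}\to V_i\otimes V_j^*$ (unique up to a complex scalar by Schur) and set $\Phi|_{H_{ij}}=\mu_{ij}\phi_{ij}$ for constants $\mu_{ij}\in\C^\times$ to be determined. Define $\alpha_{ij}\in\C^\times$ and $\gamma_{ij}>0$ by
\[
\phi_{ji}(J_0 h)=\alpha_{ij}\,\phi_{ij}(h)^{*},\qquad \|\phi_{ij}(h)\|^2_{M(n,\C)}=\gamma_{ij}\,\|h\|^2_{\mathcal H_0}.
\]
The isometry requirement reads $|\mu_{ij}|^2\gamma_{ij}=1$, and the $J_0$-compatibility requirement reads $\mu_{ji}\alpha_{ij}=\overline{\mu_{ij}}$. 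Antiunitarity of $J_0$ forces the relation $\gamma_{ji}=|\alpha_{ij}|^2\gamma_{ij}$ and $J_0^2=1$ forces $\alpha_{ji}\overline{\alpha_{ij}}=1$; together these make the two systems of constraints on the $\mu_{ij}$ mutually consistent.

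The hard part, as I see it, is the diagonal case $i=j$, where the compatibility equation collapses to $\alpha_{ii}\mu_{ii}=\overline{\mu_{ii}}$, a constraint on the argument of $\mu_{ii}$. It is soluble precisely because $J_0^2=1$ forces $|\alpha_{ii}|=1$: writing $\alpha_{ii}=e^{i\theta}$, any $\mu_{ii}$ with $|\mu_{ii}|=\gamma_{ii}^{-1/2}$ and $\arg\mu_{ii}\equiv -\theta/2\pmod\pi$ works. For off-diagonal pairs $(i,j)\in S$ with $i\ne j$ one freely selects $\mu_{ij}$ of modulus $\gamma_{ij}^{-1/2}$ and defines $\mu_{ji}=\overline{\mu_{ij}}/\alpha_{ij}$, whose modulus is automatically $\gamma_{ji}^{-1/2}$ by the relation above. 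With these choices $\Phi(\mathcal H_0)$ is a $\C$-linear subspace of $M(n,\C)$ that is invariant under left multiplication by $\mathcal A_0$ (being a sum of bimodule summands) and closed under Hermitian conjugation (by the $J_0$-compatibility), while $\Phi$ is unitary and intertwines every piece of structure. This exhibits the given type $(0,0)$ matrix geometry as isomorphic to an instance of the construction in Example~\ref{ex:matrixgeom}.
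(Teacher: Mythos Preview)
Your argument is correct and is essentially the same as the paper's: both proofs decompose $\mathcal H_0$ into irreducible bimodules, realise each summand inside $M(n,\C)$, and then adjust the resulting bimodule isomorphism by a scalar on each irreducible piece so that $J_0$ matches Hermitian conjugation, the solvability of the scalar constraints following from $J_0^2=1$. The only difference is packaging: the paper first fixes a single unitary bimodule isomorphism $\phi$, encodes the mismatch with $*$ as a unitary bimodule automorphism $u=J_0'^{-1}\phi^{-1}J_0\phi$ with $u_{ij}=u_{ji}$, and then takes a square root $r$ of $u$ to correct $\phi$, whereas you work directly with the individual scalars $\mu_{ij}$ and the cocycle-type relations on $\alpha_{ij},\gamma_{ij}$.
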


\begin{proof} Starting with a type $(0,0)$ matrix geometry $(\mathcal H_0,\mathcal A_0,\mathcal J_0)$, construct  a left action of $\mathcal A_0$ on $\C^n$, for some $n$, that is the direct sum of irreducible left $\mathcal A_0$-modules containing one representative from each equivalence class. 
Then $M(n,\C)$ is a bimodule over $\mathcal A_0$ and $\mathcal H_0'\subset M(n,\C)$ is defined as the linear subspace that is isomorphic to $\mathcal H_0$ as a bimodule. The matrix construction is completed with $J_0'=*$. 

Denote the unitary isomorphism of bimodules $\phi\colon\mathcal H_0'\to\mathcal H_0$. Then $u=J_0'^{-1}\phi^{-1}J_0^{\vphantom{-1}}\phi$ is a unitary bimodule map $\mathcal H_0'\to\mathcal H_0'$ satisfying $u^*=J_0'uJ_0'^{-1}$. The main point of the proof is to show that $u$ has a square root, i.e. $u=r^2$, with unitary bimodule map $r\colon\mathcal H_0'\to\mathcal H_0'$ satisfying $r^*=J_0'rJ_0'^{-1}$. Assuming this is true, then defining $\psi=\phi r^{-1}$, it follows that
\begin{equation} J_0'^{-1}\psi^{-1}J_0^{\vphantom{-1}}\psi= J_0'^{-1}r\phi^{-1}J_0^{\vphantom{-1}}\phi r^{-1}=r^{-1}ur^{-1}=1.\end{equation}
Thus $\psi$ is the required isomorphism between $\mathcal H_0'$ and $\mathcal H_0$.

The existence of $r$ is due to the fact that $u$ is a scalar on each irreducible sub-bimodule. In the decomposition into irreducibles $\mathcal H_0'=\oplus_{ij}\mathcal H'_{ij}$, with $i$ labelling the irreducibles for the left action of $\mathcal A$ and $j$ for the right action,  $u=\oplus u_{ij}1_{ij}$ with $1_{ij}$ the identity matrix, $u_{ij}\in\C$, $|u_{ij}|=1$. Then $u^*=\oplus \overline u_{ij}1_{ij}$ and  $J_0'uJ_0'^{-1}=\oplus \overline u_{ji}1_{ij}$. Hence  $u_{ji}=u_{ij}$ and it suffices to choose the square roots $r_{ij}=u_{ij}^{1/2}$ so that also $r_{ji}=r_{ij}$.
\end{proof}

The simplest examples are as follows.

\begin{example}\label{ex:diagonal}
Let $a\in \mathcal A_0=M(n,\C)$ (considered as a real algebra) and $m\in\mathcal H_0= M(n,\C)$.
The bimodule action is 
\begin{equation} a\triangleright m\triangleleft b
=amb,
\end{equation}
and the real structure
\begin{equation}J m=m^*\end{equation}
\end{example}

\begin{example}\label{ex:offdiagonal}
Let $a\in \mathcal A_0=M(n_1,\C)\oplus M(n_2,\C)$ and $m\in\mathcal H_0\subset M(n_1+n_2,\C)$, represented as block matrices as follows
\begin{equation}a=\begin{pmatrix}a_1& .\\ .&a_2\end{pmatrix},\quad m=\begin{pmatrix}.&m_1\\m_2& .\end{pmatrix}.\end{equation}
The bimodule action is 
\begin{equation} a\triangleright m\triangleleft b
=\begin{pmatrix}.&a_1m_1b_2\\a_2m_2b_1&.\end{pmatrix},
\end{equation}
and the real structure
\begin{equation}J m=\begin{pmatrix}.&m^*_2\\m_1^*&.\end{pmatrix}.\end{equation}
Note that this example  for $n_1\ne n_2$ does not have a `separating vector' in the terminology of \cite{Chamseddine:2007hz}.
\end{example}

\begin{example}\label{ex:complexconj}
This example is constructed from example \ref{ex:offdiagonal} with $n_1=n_2=n$ but with $\mathcal A_0$ the subalgebra of $M(n,\C)\oplus M(n,\C)$ given by $a_2=\overline a_1$, the complex conjugate matrix.
\end{example}

A Clifford module $V$ of type $(p,q)$ also determines a fermion space. The KO-dimension is $s\equiv q-p \mod 8$, the chirality $\gamma$ and real structure $C$ are determined by the Clifford module, and the algebra of coordinates is $\R$. The vector space $V$ is called the space of spinors, and the module will be assumed to be irreducible in the sense given in the following definition.

\begin{definition} \label{def:matrixgeom} A type $(p,q)$ matrix geometry is a real spectral triple\\ $(s,\mathcal H,\mathcal A,\Gamma,J,D)$ with fermion space the tensor product of a Clifford module of type $(p,q)$ and a type $(0,0)$ matrix geometry. The Clifford module is required to be irreducible if $s$ is even, and if $s$ is odd, the eigenspaces $V_\pm\subset V$ of the chirality operator $\gamma$ are required to be irreducible.
Explicitly, 
\begin{itemize}\item $s\equiv q-p \mod 8$
\item $\mathcal H=V\otimes\mathcal H_0$
\item $\langle v\otimes m,v'\otimes m'\rangle=(v,v')\langle m,m'\rangle$
\item $\mathcal A=\mathcal A_0$
\item $\rho(a)(v\otimes m)=v\otimes (\rho_0(a)m)$
\item $\Gamma=\gamma\otimes 1$
\item $J=C\otimes J_0$
\end{itemize}

Note that in the odd case, either of $V_\pm$ may be zero-dimensional, in which case $V$ is itself irreducible.
\end{definition}

\subsection{Matrix Dirac operator}
A Dirac operator in a matrix geometry is any operator allowed by the axioms  \ref{ax:dirac}--\ref{ax:firstorder}. According to lemma \ref{lem:dirac}, this is determined by $\theta\in\End(\mathcal H)\cong \End(V)\otimes\End(\mathcal H_0)$, and can be written
\begin{equation}\theta=\sum_i\omega^i\otimes X_i.
\end{equation}
with the $\omega^i$ a linearly independent set. The condition $[\theta,J\rho(a)J^{-1}]=0$ for all $a\in\mathcal A$ is then equivalent to requiring that 
\begin{equation}[X_i,J_0^{\vphantom{-1}}\rho_0(a)J_0^{-1}]=0\quad \text{ for all } a\in\mathcal A.
\end{equation}

Let $\Omega$ be the $\R$-algebra generated by the matrices $\gamma^a$ in the Clifford module, and $\Omega^+$ the subalgebra generated by even products of the gamma matrices.
 The irreducibility assumptions for a fuzzy space imply that $\omega^i$ can be chosen to be elements of $\Omega$, and this will be assumed in the following. To examine this in more detail, the cases of $s$ odd and $s$ even must be looked at separately.

For $s$ even, $V$ is irreducible and so $\Omega\otimes_{\R}\C=\End(V)$. The algebra is graded by $\Omega=\Omega^+\oplus\Omega^-$, with $\Omega^-$ the vector space generated by products of odd numbers of $\gamma^a$.
The condition  $\theta\Gamma+\Gamma\theta=0$ is satisfied if the $\omega^i$ are taken to be a basis of $\Omega^-\otimes_{\R}\C$. Finally, any complex coefficients can be absorbed into the $X_i$, so that  it can be assumed that $\omega^i\in\Omega^-$.

For $s$ odd, $\Omega\otimes_{\R}\C=\End(V^+)\oplus\End(V^-)$. However, the $\omega^i$ also commute with $\gamma$ and so can be chosen as a basis of $\Omega\otimes_{\R}\C$ with $\omega^i\in\Omega$.

The final condition from  lemma \ref{lem:dirac} is $\theta=\theta^*$. Since $\theta^*=\sum_i\omega^{i*}\otimes X_i^*$, the condition is satisfied if $\omega^i$ and $X_i$ are either both Hermitian or both anti-Hermitian. Such a basis of $\Omega^-$ or $\Omega$ exists by taking products of $\gamma^i$, as will become clear from the examples shortly. 

\subsection{Block permutation examples}

Let $(\C^n)^*$ be the vector space dual to $\C^n$. For a matrix geometry with $\mathcal H_0\subset M(n,\C)\cong \C^n\otimes(\C^n)^*$ and $\mathcal A\subset M(n,\C)$, the operators $X_i\in\End(\mathcal H_0)\subset \End(\C^n)\otimes\End((\C^n)^*)$ can each be written in terms of a finite set of matrices $Y_{ij}, Z_{ij}\in M(n,\C)$, $j=1,\ldots,J$, with $[Z_{ij},a]=0$ for all $a\in\mathcal A$, by
\begin{equation}X_i\colon m\mapsto \sum_j Y_{ij}mZ_{ij},\end{equation} 
using matrix multiplication. 

This simplifies in the cases where $\mathcal H_0$ is a block permutation matrix when decomposed into irreducible bimodules (i.e., there is only one non-zero block on each row or column of the block decomposition of $\mathcal H_0$). Then the matrices $Y_{ij}$ and $Z_{ij}$ must be block diagonal to preserve the block permutation form of $\mathcal H_0$. 
Since $Z_{ij}$ commutes with the right action, it acts as a scalar in each irreducible representation of $\mathcal A$ in $(\C^n)^*$.  Therefore $mZ_{ij}=Z'_{ij}m$ for matrices $Z'_{ij}$, in which the diagonal blocks of $Z_{ij}$ are permuted. Hence $X_i$ acts as the left multiplication of the matrix $K_i=\sum_j X_{ij}Z'_{ij}\in M(n,\C)$. The matrix $K_i$ is also block diagonal.

Thus $\theta$ acts as an operator in $\mathcal H$ by 
\begin{equation}\label{eq:thetaaction}\theta (v\otimes m)=\sum_i (\omega^i v)\otimes (K_i m)\end{equation}
using matrix multiplication. Some examples of block permutation fermion spaces are examples \ref{ex:diagonal}, \ref{ex:offdiagonal} and \ref{ex:complexconj}.



\section{Fuzzy geometries}\label{sec:fuzzygeometries}

\subsection{Fuzzy spaces}\label{sec:fuzzyspaces}
The formalism of matrix geometries simplifies if the following assumptions are made. Firstly $\mathcal A$ is a simple algebra, i.e., $M(n,\R)$, $M(n,\C)$ or $M(n/2,\HH)$. Then it is assumed that $\mathcal H_0=M(n,\C)$ with $\mathcal A$ acting by matrix multiplication, expressing quaternions as $2\times2$ complex matrices so that $M(n/2,\HH)\subset M(n,\C)$. These particular cases of matrix geometries are called fuzzy spaces. The definition is the adaptation of example \ref{ex:matrixgeom} and definition \ref{def:matrixgeom} to this case and is summarised here explicitly.

 The Hilbert space $V$ is the spinor space for type $(p,q)$ gamma matrices with chirality $\gamma$ and real structure $C$. For $p+q$ even the space $V$ is irreducible, and for $p+q$ odd the chiral subspaces $V_\pm$ are irreducible. The construction uses a choice of natural number $n$.

\begin{itemize}\item $s\equiv q-p \mod 8$
\item $\mathcal A=M(n,\C), M(n,\R) \text{ or } M(n/2,\HH)$.
\item $\mathcal H=V\otimes M(n,\C)$
\item $\langle v\otimes m,v'\otimes m'\rangle=(v,v')\tr m^*m'$
\item $\rho(a)(v\otimes m)=v\otimes (am)$
\item $\Gamma(v\otimes m)=\gamma v\otimes m$
\item $J(v\otimes m)=Cv\otimes m^*$
\end{itemize}
According to the previous section, a Dirac operator is determined by products of gamma matrices $\omega^i\in\Omega$ and matrices $K_i\in M(n,\C)$ as
\begin{equation}\label{eq:diracmatrix} D(v\otimes m)=\sum_i \omega^i v\otimes (K_i m+\epsilon' m K^*_i).
\end{equation}
If $s$ is even, then the $\omega^i$ are the product of an odd number of gamma matrices. It is worth noting that this formula does not depend on which simple algebra is chosen for $\mathcal A$.

For $\epsilon'=1$ , $C\omega^i C^{-1}=\omega^i$ and so the Dirac operator can be expressed using commutators and anti-commutators, with $\alpha^j, L_j$ anti-Hermitian and $\tau^k, H_k$ Hermitian. 
\begin{equation}\label{eq:diracformula}
D(v\otimes m)
=\sum_j\alpha^jv\otimes[L_j,m]+\sum_k\tau^kv\otimes\{H_k,m\}.
\end{equation}

For $\epsilon'=-1$ , there is an extra sign in the equation. Write $\alpha^j_+, \tau^k_+$ for the even elements of $\Omega$ and $\alpha^j_-, \tau^k_-$ for the odd elements. Then
\begin{equation}\label{eq:diracformula2}\begin{split}
 D(v\otimes m)&=\sum_j\alpha_-^jv\otimes[L_j,m]+\sum_k\tau_-^kv\otimes\{H_k,m\}\\
&+\sum_l\alpha_+^lv\otimes\{L_l,m\}+\sum_r\tau_+^rv\otimes[H_r,m].\end{split}
\end{equation}

The ambiguity in $\theta$ recorded in lemma \ref{lem:ambiguity} amounts to the freedom to add terms
\begin{equation}\psi=\begin{cases}\alpha\otimes (i1) &(\epsilon'=1)\\
\alpha_-\otimes (i1)+\tau_+\otimes 1 &(\epsilon'=-1)\end{cases}
\end{equation}
to $\theta$. In each case these terms lead to a commutator in the Dirac operator, \eqref{eq:diracformula} or \eqref{eq:diracformula2}, that vanishes.

 There is certain simplification in the case of $s$ odd and irreducible $V$ due to the fact that the product of all the gamma matrices $P$ is a scalar. Multiplying by $P$ converts an odd element of $\Omega$ into an even element. Therefore one only needs to use odd elements (or only use even elements) $\omega^i$ in the definition of the Dirac operator. For $\epsilon'=-1$, $P=\pm i$, so this relates for example $\tau_+^n=P\alpha^j_-$ and $H_n=\pm iL_j$.

\begin{examples} In the following fuzzy spaces for $n\le 2$, the gamma matrices are Hermitian or anti-Hermitian according to the definitions in Examples \ref{ex:smallcliff}. 

\begin{itemize}\item Type (0,0)
\begin{equation} D=0\end{equation}
\item Type (1,0)
\begin{equation} D=\{H,\cdot\}+\gamma^1\otimes\{H_1,\cdot\} \end{equation}
\item Type (0,1)
\begin{equation} D=[H,\cdot]+\gamma^1\otimes[L_1,\cdot]\end{equation}
\item Type (2,0)
\begin{equation} D=\gamma^1\otimes\{H_1,\cdot\}+\gamma^2\otimes\{H_2,\cdot\}\end{equation}
\item Type (1,1)
\begin{equation} D=\gamma^1\otimes\{H,\cdot\}+\gamma^2\otimes[L,\cdot]\end{equation}
\item Type (0,2)
\begin{equation} \label{dirac02} D=\gamma^1\otimes[L_1,\cdot]+\gamma^2\otimes[L_2,\cdot]\end{equation}
\end{itemize}
\end{examples}

\begin{examples}
Some higher-dimensional examples are given in the Riemannian case $p=0$, so that $s=q$. All of the gamma matrices are anti-Hermitian.
\begin{itemize}\item Type (0,3)
\begin{multline}\label{dirac03} D=\{H,\cdot\}+\gamma^1\otimes[L_1,\cdot]+\gamma^2\otimes[L_2,\cdot]+\gamma^3\otimes[L_3,\cdot]\\
+\gamma^2\gamma^3\otimes[L_{23},\cdot]+\gamma^3\gamma^1\otimes[L_{31},\cdot]+\gamma^1\gamma^2\otimes[L_{12},\cdot]+\gamma\otimes\{H_{123},\cdot\}
\end{multline}
\item Type (0,4)
\begin{equation}\label{dirac04} D=\sum_i \gamma^i\otimes[L_i,\cdot]+\sum_{i<j<k}\gamma^i\gamma^j\gamma^k\otimes\{H_{ijk},\cdot\} \end{equation}
\end{itemize}
\end{examples}
The last example shows clearly the analogy with the formula for the Dirac operator on a Riemannian manifold.
 The first term of \eqref{dirac04} is analogous to the derivative term and the second term is analogous to the spin connection term. This analogy is also apparent in the type $(0,n)$ examples for $n<4$ but new terms with products of five or more gamma matrices may appear in the fuzzy space Dirac operator for $n>4$.

\subsection{Transformations of a fuzzy space}\label{sec:fuzzytrans}
The structure of  transformations of a fuzzy space is investigated in this section. Let $U\colon\mathcal H\to\mathcal H$ be a transformation, as in \S\ref{sec:transf}. Then $U$ induces an automorphism  $\phi\colon\mathcal A\to\mathcal A$ such that
\begin{equation}U\rho(a)U^{-1}=\rho(\phi(a))\end{equation}
and, in particular, an automorphism of the center of $\mathcal A$. For $M(n,\R)$ and $M(n/2,\HH)$, the center is $\R$ and so this is trivial. However for the real algebra $M(n,\C)$ the center is $\C$ (which does have a non-trivial automorphism, complex conjugation). Since in a fuzzy space the center of $\mathcal A$ acts as a scalar, it commutes with $U$ and hence the automorphism of the center is trivial in this case also.
Then for all fuzzy spaces the Skolem-Noether theorem implies that
\begin{equation}\phi(a)=gag^{-1}
\end{equation}
for some invertible $g\in\mathcal A$. The element $g$ is not uniquely determined since one can multiply $g$ by an invertible central element of $\mathcal A$; therefore if $U\in G$, a group of transformations, then $g\in\widetilde G$, a central extension of $G$. The automorphism can now be written
\begin{equation}U\rho(a)U^{-1}=\rho(g)\rho(a)\rho(g)^{-1},\end{equation}
which implies that $U^{-1}\rho(g)$ commutes with the action of $\mathcal A$ in $\mathcal H$. From this one can see that  $U^{-1}\rho(g)J\rho(g)J^{-1}$ commutes with both the left and the right action of $\mathcal A$ and hence acts as an operator in $V$. This argument leads to
\begin{lemma}\label{lem:fuzzytrans} A transformation $U$ of a fuzzy space acts on $v\otimes m\in\mathcal H$ by
\begin{equation}\label{eq:fuzzytrans}U(v\otimes m)=hv\otimes gmg^*\end{equation}
for some unitary $g\in\mathcal A$ and unitary $h\colon V\to V$ satisfying $h\gamma=\gamma h$ and $hC=Ch$.
Conversely, any such $g$ and $h$ determine a transformation.
\end{lemma}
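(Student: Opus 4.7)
}
The strategy is to continue the argument that is already begun in the paragraph preceding the lemma. That argument produces an element $g\in\mathcal A$ with $U\rho(a)U^{-1}=\rho(g)\rho(a)\rho(g)^{-1}$ and shows that the operator
\[
T:=U^{-1}\,\rho(g)\,J\rho(g)J^{-1}
\]
commutes with both the left and the right action of $\mathcal A$. The plan is then to (i) show $g$ can be chosen unitary, (ii) identify $T$ with an operator of the form $h^{-1}\otimes 1$ via a commutant calculation, (iii) rearrange to obtain the stated formula, and (iv) derive the three properties of $h$ from the fact that $U$ is a transformation. The converse is then a direct verification.

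For (i), the $*$-axiom for $\rho$ together with $U\rho(a^*)U^{-1}=\rho(\phi(a))^*=\rho(\phi(a)^*)$ shows that $\phi$ is a $*$-automorphism, so $g^*g$ commutes with every $a\in\mathcal A$. Since $g^*g$ is positive and central, it is a positive real scalar multiple of $1$, and rescaling $g$ makes it unitary. Then $J\rho(g)J^{-1}$ is computed directly from $J(v\otimes m)=Cv\otimes m^*$ to be the right multiplication $v\otimes m\mapsto v\otimes mg^*$, so $\rho(g)J\rho(g)J^{-1}$ acts on $\mathcal H=V\otimes M(n,\C)$ as $v\otimes m\mapsto v\otimes gmg^*$; in particular it acts trivially on the $V$ factor.

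Step (ii) is the main technical content. Under the identification $\End_{\C}(\mathcal H)=\End_{\C}(V)\otimes\End_{\C}(M(n,\C))$, I claim the commutant of the bimodule action of $\mathcal A$ on $M(n,\C)$ is just $\C\cdot 1$, so that the common commutant of the left and right actions of $\mathcal A$ on $\mathcal H$ is $\End_{\C}(V)\otimes 1$. For $\mathcal A=M(n,\C)$ this is immediate from simplicity. For $\mathcal A=M(n,\R)$ or $M(n/2,\HH)$ the same conclusion follows from a short direct check: any $\C$-linear $S\colon M(n,\C)\to M(n,\C)$ with $S(am)=aS(m)$ and $S(ma)=S(m)a$ for every $a\in\mathcal A$ satisfies $S(1)\in\C\cdot 1$ by the standard fact that the centralizer of $M(n,\R)$ (resp.\ $M(n/2,\HH)$) in $M(n,\C)$ is $\C$, and then $S(m)=mS(1)=S(1)m$. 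This is the step I expect to be the main obstacle, since it has to be made uniform across the three division algebras. Applying this to $T$ yields $T=h^{-1}\otimes 1$ for some $\C$-linear $h\colon V\to V$.

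Step (iii): rearranging gives $U=\rho(g)J\rho(g)J^{-1}(h\otimes 1)$, which on a simple tensor reads $U(v\otimes m)=hv\otimes gmg^*$, the claimed form. For (iv), unitarity of $U$ forces $h\otimes 1$ to be unitary (since the outer factor is), hence $h$ is unitary. Comparing $U\Gamma(v\otimes m)=h\gamma v\otimes gmg^*$ with $\Gamma U(v\otimes m)=\gamma hv\otimes gmg^*$ gives $h\gamma=\gamma h$. Comparing
\[
UJ(v\otimes m)=U(Cv\otimes m^*)=hCv\otimes gm^*g^*
\]
with
\[
JU(v\otimes m)=J(hv\otimes gmg^*)=Chv\otimes gm^*g^*
\]
gives $hC=Ch$. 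Finally, for the converse, given unitary $g\in\mathcal A$ and unitary $h\colon V\to V$ with $h\gamma=\gamma h$ and $hC=Ch$, the three defining properties of a transformation ($U$ unitary, $U\Gamma=\Gamma U$, $UJ=JU$, and $U\rho(a)U^{-1}=\rho(gag^{-1})$) are verified by the same direct calculations read in reverse.
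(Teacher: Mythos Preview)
Your proof is correct and follows the paper's approach: identify $U^{-1}\rho(g)J\rho(g)J^{-1}$ as $h^{-1}\otimes 1$ via the bimodule commutant, rearrange to the stated formula, and read off the conditions on $h$ from the transformation axioms. The only difference is the bookkeeping of unitarity---you normalise $g$ first (using that $\phi$ is a $*$-automorphism so $g^*g$ is central) and then deduce $h$ unitary from unitarity of $U$, whereas the paper leaves $g,h$ unnormalised, expands the unitarity of $U$ as $(hv,hv')\tr(gm^*g^*gm'g^*)=(v,v')\tr(m^*m')$, concludes $h^*h=\lambda 1$ and $g^*g=\mu 1$ with $\lambda\mu^2=1$, and rescales both at once.
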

\begin{proof}The argument above shows that given a transformation $U$, \\$U^{-1}\rho(g)J\rho(g)J^{-1}=h^{-1}\otimes 1$, which rearranges to \eqref{eq:fuzzytrans}. The conditions for a transformation show that  $h\gamma=\gamma h$ and $hC=Ch$, and the unitarity of $U$ gives
\begin{equation}(hv,hv')\tr(gm^*g^*gm'g^*)=(v,v')\tr m^*m'
\end{equation}
for all $v$, $v'$, $m$, $m'$. This implies that $h^*h=\lambda 1$ for $\lambda>0\in\R$ and $g^*g=\mu 1$ with $\mu>0\in\R$, and $\lambda\mu^2=1$. Replacing $h$ with $\mu h$ and $g$ with $g/\sqrt\mu$ makes $h$ and $g$ unitary. The converse is straightforward.
\end{proof}
Applying a transformation to the Dirac operator, according to \eqref{eq:transD} and \eqref{eq:diracmatrix}, gives the following simple transformation of the data for the Dirac operator
\begin{equation}\omega^i\mapsto h\omega^i h^{-1},\quad K_i\mapsto g K_i g^{-1}.
\end{equation}

\subsection{Generalised fuzzy spaces}
A generalised fuzzy space is the next simplest matrix geometry after the fuzzy spaces. It is based on example \ref{ex:offdiagonal}. This matrix geometry is described explicitly as follows. The Hilbert space $V$ is the spinor space for type $(p,q)$ gamma matrices with the same conditions as in \S\ref{sec:fuzzyspaces}. The notation $M(n,m,\mathbb C)$ is used for $n\times m$ matrices with coefficients in $\mathbb C$.

\begin{itemize}\item $s\equiv q-p \mod 8$
\item $\mathcal A=\mathcal A_1\oplus\mathcal A_2$, with each $\mathcal A_i$ a simple algebra.
\item $\mathcal H=V\otimes\bigl(M(n_1,n_2,\C)\oplus M(n_2,n_1,\C)\bigr)$
\end{itemize}
These are written in matrix notation as
\begin{equation}a=\begin{pmatrix}a_1& .\\ .&a_2\end{pmatrix}\in\mathcal A,\quad \xi=\begin{pmatrix}.&\xi_1\\\xi_2& .\end{pmatrix}\in\mathcal H.\end{equation}
\begin{itemize}
\item $\left\langle\begin{pmatrix}.&v\otimes m_1\\v\otimes m_2& .\end{pmatrix},
\begin{pmatrix}.&v'\otimes m'_1\\v'\otimes m'_2& .\end{pmatrix}\right\rangle=
(v,v')\tr (m_1^*m_1'+m_2^*m_2')$
\item  $\rho\begin{pmatrix}a_1& .\\ .&a_2\end{pmatrix}
\left\langle\begin{pmatrix}.&v\otimes m_1\\v\otimes m_2& .\end{pmatrix}\right\rangle=
\left\langle\begin{pmatrix}.&v\otimes a_1 m_1\\v\otimes a_2m_2& .\end{pmatrix}\right\rangle$
\item $\Gamma\begin{pmatrix}.&v\otimes m_1\\v\otimes m_2& .\end{pmatrix}=
\begin{pmatrix}.&\gamma v\otimes m_1\\\gamma v\otimes m_2& .\end{pmatrix}$
\item $J\begin{pmatrix}.&v\otimes m_1\\v\otimes m_2& .\end{pmatrix}=
\begin{pmatrix}.&Cv\otimes m_2^*\\Cv\otimes m_1^*& .\end{pmatrix}$
\end{itemize}
The Dirac operator is determined by the same formula \eqref{eq:diracmatrix} as for the fuzzy spaces, with the matrices $K_i$ block diagonal. Thus
\begin{equation}\label{eq:generalisedD} D=\begin{pmatrix}D_1&.\\.& D_2\end{pmatrix}
\end{equation}
and so $D_2=\epsilon'J^{-1}D_1J$, with both $D_1$ and $D_2$ self-adjoint. Thus if $\lambda$ is an eigenvalue of $D_1$, then $\epsilon'\lambda$ is an eigenvalue of $D_2$. Splitting the operators $K_i$ into anti-hermitian and hermitian generators, these are written
\begin{equation}L_j=\begin{pmatrix}L_{(1)j}&.\\.& L_{(2)j}\end{pmatrix},
\quad H_k=\begin{pmatrix}H_{(1)k}&.\\.& H_{(2)k}\end{pmatrix}
\end{equation}
with the parentheses used to avoid the notational clash of different types of indices. The Dirac operator is again given by \eqref{eq:diracformula} and \eqref{eq:diracformula2}, giving for $\epsilon'=1$
\begin{equation*}
D_1(v\otimes m_1)
=\sum_j\alpha^jv\otimes(L_{(1)j}m_1-m_1L_{(2)j})+\sum_k\tau^kv\otimes( H_{(1)k}m_1+m_1H_{(2)k})
\end{equation*}
\begin{equation}\label{eq:dirac1}
D_2(v\otimes m_2)
=\sum_j\alpha^jv\otimes(L_{(2)j}m_2-m_2L_{(1)j})+\sum_k\tau^kv\otimes( H_{(2)k}m_2+m_2H_{(1)k})
\end{equation}
which is essentially the same equation but taking into account that the `representation' of $L_j$ and $H_k$ on the left and right differ. A similar modification of the formula holds for $\epsilon'=-1$.

A similar argument to lemma \ref{lem:fuzzytrans} show that for $n_1\ne n_2$, a transformation of a generalised fuzzy space has the form \eqref{eq:fuzzytrans}, with
\begin{equation}g=\begin{pmatrix}g_{1}&.\\.& g_{2}\end{pmatrix}.
\end{equation} 
The fermions transform by
\begin{equation}U\begin{pmatrix}.&v\otimes m_1\\v\otimes m_2& .\end{pmatrix}=
\begin{pmatrix}.& hv\otimes g_1m_1g_2^*\\ hv\otimes g_2m_2g_1^*& .\end{pmatrix}.\end{equation}
For $n_1=n_2$, there are also transformations that interchange the two blocks.

\section{Spherical fuzzy spaces}\label{sec:spheres}

This section is devoted to explaining how the fuzzy sphere, and various generalisations of it, can be presented as a real spectral triple. These spaces have the symmetry group $\SU(2)$, which is the same as the symmetry group of the spectral triple for the commutative 2-sphere $S^2$.

The Grosse-Presnajder operator \eqref{GP} has this symmetry but
 is not an example of a type $(0,2)$ fuzzy space \eqref{dirac02}. However, it is an example of a type $(0,3)$ fuzzy space  \eqref{dirac03}, which does not require a chirality operator, since $s=3$. This is a somewhat unsatisfactory situation since it is desirable for a fuzzy analogue of the commutative 2-sphere to have $s=2$. 

The solution to this problem that is proposed here is to augment the type $(0,3)$ Clifford module with a Hermitian generator to give a type $(1,3)$ Clifford module, so that the KO-dimension $s=3-1=2 \mod 8$, as expected for a $2$-sphere.
 It is a convenient notation to call the sole Hermitian gamma matrix $\gamma^0$ so that the gamma matrices are $\gamma^1, \gamma^2, \gamma^3,\gamma^0$.
These are required to form an irreducible Clifford module and the space the gamma matrices act in is $V\cong\C^4$. 

 To construct a fuzzy space of type $(1,3)$ having a  Dirac operator with spherical symmetry, consider an  $n$-dimensional representation of $\Spin(3)=\SU(2)$, the double cover of  $\SO(3)$. Let $L_{jk}$, $j<k=1,2,3$ be standard generators of the Lie algebra $\so(3)$ in this representation. This representation may be reducible, though in the simplest case, in \S\ref{sec:fuzzysphere}, it will be irreducible.  These matrices satisfy
\begin{equation}\label{eq:standardgen}[L_{jk},L_{lm}]=\delta_{kl}L_{jm}-\delta_{km}L_{jl}-\delta_{jl}L_{km}+\delta_{jm}L_{kl}
\end{equation}
using the relations $L_{jk}=-L_{kj}$. The Hilbert space of the spectral triple is $V\otimes M(n,\C)$ and the Dirac operator is defined to be
\begin{equation}\label{eq:fuzzydirac} 
D=\gamma^0+\sum_{j<k=1}^3\;
\gamma^0\gamma^j\gamma^k\otimes [L_{jk},\cdot\,]
\end{equation}
The product $\gamma^0\gamma^j\gamma^k$ is odd and anti-Hermitian, so this formula is an example of a Dirac operator according to \eqref{eq:diracformula}.

Using the product representation from example \ref{ex:product10}, the gamma matrices can be written
\begin{equation}\label{eq:gamma13}\gamma^0=\begin{pmatrix}0&1\\1&0\end{pmatrix},\quad \gamma^a=\begin{pmatrix}0&i\sigma^a\\-i\sigma^a&0\end{pmatrix}, \quad a=1,2,3
\end{equation}
with $\sigma^a$ gamma matrices for an irreducible type $(0,3)$ Clifford module acting in vector space $V'=\C^2$, with real structure $C'$.
Then the Dirac operator becomes the block matrix
\begin{equation}\label{eq:diraceven} D=\begin{pmatrix}0&\widetilde d\,\\ \widetilde d&0\end{pmatrix}
\end{equation}
with off-diagonal blocks the Grosse-Presnajder operator, repeated here,
\begin{equation}\label{eq:nc-mod-dirac}
\widetilde d=1+\sum_{j<k}\sigma^j\sigma^k\otimes [L_{jk},\cdot\,]
\end{equation} 
The chirality operator is
\begin{equation}\label{eq:fuzzychirality}\Gamma=\gamma\otimes 1=\sigma\begin{pmatrix}1&0\\0&-1\end{pmatrix},\end{equation}
using the scalar $\sigma=\pm1$, which is the chirality for the $\sigma^a$.
Note that $\widetilde d$ commutes with the real structure $J'=C'\otimes *$ constructed using the real structure $C'$ for the $\sigma^a$. 

By a change of basis using the matrix
\begin{equation}W=\frac1{\sqrt2}\begin{pmatrix}1 &1\\-1&1\end{pmatrix},\end{equation}
in which each $1$ is a $2n\times 2n$ identity matrix, the Dirac operator is block-diagonalised as
\begin{equation}WDW^{-1}=\begin{pmatrix}\widetilde d&0\\ 0&-\widetilde d\,\end{pmatrix}.\end{equation}
In this basis $\gamma^0$ becomes
\begin{equation}W\gamma^0W^{-1}=\begin{pmatrix}1&0\\ 0&-1\end{pmatrix}.\end{equation}
but the chirality operator is off-diagonal
\begin{equation}W\Gamma W^{-1}=-\sigma\begin{pmatrix}0&1\\1&0\end{pmatrix}.\end{equation}
So although the Dirac operator is well-defined on the $\gamma^0=1$ eigenspace as the top-left block, the chirality operator projected onto this subspace vanishes. This explains the relation of the Grosse-Presnajder operator to this spectral triple.

The group $\SU(2)$ acts as a group of transformations. An element 
$h\in\SU(2)$ acts in $V$ and determines a transformation
$U$ according to formula \eqref{eq:fuzzytrans} of lemma \ref{lem:fuzzytrans}, with $g(h)$ the matrix representation of $\SU(2)$ in $\C^n$ that has the Lie algebra action used above. Under this action, the Dirac operator \eqref{eq:fuzzydirac} is mapped to $UDU^{-1}$. This determines the adjoint action on both $L_{jk}$ and the matrices
\begin{equation}\label{eq:SpinOperatordef}\SpinOperator_{jk}=-\frac14(\sigma^j\sigma^k-\sigma^k\sigma^j),
\end{equation}
which are also standard generators for $\so(3)$, obeying similar relations to \eqref{eq:standardgen}, replacing $L_{jk}$ with $\SpinOperator_{jk}$. (Note that the three-dimensional indices here can be raised and lowered with the identity matrix $\delta_{jk}$ and so there is no particular significance in whether subscripts or superscripts are used.) Thus $D$ is invariant and $\SU(2)$ is a group of symmetries. It is worth noting that the non-trivial symmetries are not inner automorphisms of the spectral triple.

The operator $\gamma^0$ commutes with the Dirac operator and the action of $\SU(2)$ and so is useful is classifying the eigenvectors of the Dirac operator. 

The spectrum of $\widetilde d$ can be analysed in terms of Casimirs for the Lie algebra $\so(3)$.
Define the dot product of Lie algebra generators as, for example, 
\begin{equation}\SpinOperator\centerdot  L=\sum_{j<k}\SpinOperator_{jk}L_{jk}.\end{equation}
 Putting $c_1=-\SpinOperator\centerdot\SpinOperator$, $c_2=-\Lambda\centerdot \Lambda$, with $\Lambda_{jk}$ denoting the adjoint action of $L_{jk}$ on matrices, $m\mapsto [L_{jk},m]$, and $c=-(\SpinOperator+\Lambda)\centerdot(\SpinOperator+\Lambda)$, this gives
\begin{equation}\label{dfromcasimirs}\widetilde d=-2\SpinOperator\centerdot\Lambda+1=c-c_1-c_2+1.\end{equation}

\subsection{The fuzzy sphere}\label{sec:fuzzysphere}
 The simplest case is where the action $g(h)$ of $\SU(2)$ is irreducible and is called the fuzzy sphere.
An irreducible representation of $\Spin(3)\cong\SU(2)$  is determined by a half-integer $l\in\frac12\Z$, $l\ge0$, called the spin of the representation. The representation has dimension $2l+1$ and Casimir $l(l+1)$. Every representation is isomorphic to its complex conjugate. The spinor space for the modified Dirac operator $V'\cong\C^2$ carries the irreducible representation $l=\frac12$ and so $c_1=\frac34$. 

For the fuzzy sphere determined by the irreducible representation $l=\frac12(n-1)$ on $\C^n$, the adjoint representation on $M(n,\C)$ is $l\otimes l\cong \oplus k$, with $k\in \{0,1,\ldots 2l\}$. Hence the representation on $\C^2\otimes M(n,\C)$ is $\frac12\otimes\bigl(\oplus k\bigr)=\oplus j$, with
$j=k\pm\frac12$ for $k\ne0$ and $j=\frac12$ for $k=0$. Thus $j\in\{\frac12,\frac32,\ldots,n-\frac12\}$ each with multiplicity two (corresponding to $\pm$) except for the top spin $ n-\frac12$, which has multiplicity one (corresponding to the sign $+$). Then $c_2=k(k+1)$ and $c=j(j+1)$, leading to $\widetilde d=\pm(j+\frac12)$, again with the same sign $\pm$. Thus the spectrum of $\widetilde d$ is
\begin{equation}\label{eq:specfuzzy2}\pm1,\;\pm2,\ldots,\;\pm(n-1),\;+n\end{equation}
each eigenvalue corresponding to a single irreducible of $\Spin(3)$. This spectrum is not symmetrical about zero. 

The Dirac operator $D$ acts on $\mathcal H\cong \C^4\otimes M(n,\C)$ and has the spectrum of $\widetilde d\oplus-\widetilde d$, as shown in the following table.
$$
\begin{tabular}{|l|cccccccccc|}
  \hline
  $\gamma^0$ \rule{0pt}{15pt}&$-1$&$-1$&\dots&$-1$&$-1$&1&1&\dots&1&1\\[5pt] 
$j$\rule{0pt}{15pt}
&$n-\frac12$&$n-\frac32$&\dots&$\frac32$&$\frac12$
&$\frac12$&$\frac32$&\dots&$n-\frac32$&$n-\frac12$\\[5pt]
\hline
$D$\rule{0pt}{15pt}
&$-n$&$\mp(n-1)$&\dots&$\mp2$&$\mp1$
&$\pm 1$&$\pm2$&\dots&$\pm(n-1)$&$n$\\[5pt]
 \hline
\end{tabular}
$$
 The multiplicity of each simultaneous eigenvalue of $D$ and $\gamma^0$ is the dimension of the irreducible representation of $\Spin(3)$ that is shown in the second row.  Hence each irreducible of $\Spin(3)$ that appears in the table occurs four times in total, except for  $n-\frac12$, which occurs twice.

\subsection{The generalised fuzzy sphere}\label{sec:gensphere}

Generalised fuzzy spaces can be defined using the same formula  \eqref{eq:fuzzydirac} for the Dirac operator. They can be written in blocks in spinor space in the same way as for a fuzzy case, \eqref{eq:diraceven}. Decomposing into the two blocks in the space of matrices (as in \eqref{eq:generalisedD}) gives
\begin{equation}\widetilde d=\begin{pmatrix}\widetilde d_1&.\\.&\widetilde d_2\end{pmatrix},
\end{equation}
with each $\widetilde d_i$ again expressed in terms of Casimir operators, according to \eqref{dfromcasimirs}, the difference being that $\Lambda$ is no longer the adjoint action of the Lie algebra on the matrices. Written as a left action, it acts on $m_1$ as the operators of the Lie algebra associated to the representation $g_1\otimes \overline g_2$ of $\SU(2)$, $\overline g_2$ denoting the complex conjugate matrix. Similarly, it acts on $m_2$ as the Lie algebra for the representation $g_2\otimes \overline g_1$. The two operators $\widetilde d_1$, $\widetilde d_2$ are related by $J'\widetilde d_1=\widetilde d_2 J'$, and so have the same spectrum. This can also be seen from the fact that $J'$ intertwines the Casimir operators. The two matrix blocks of the Dirac operator are determined by the spinor block decomposition
\begin{equation}\label{eq:diracgensphere} D_1=\begin{pmatrix}0&\widetilde d_1\,\\ \widetilde d_1&0\end{pmatrix},\quad\quad  D_2=\begin{pmatrix}0&\widetilde d_2\,\\ \widetilde d_2&0\end{pmatrix}.
\end{equation}

 The simplest case is where the actions of $g_{1}(h)$ and $g_{2}(h)$ are irreducible. This is called the generalised fuzzy sphere. Suppose the irreducible representations have spin $l_1=\frac12(n_1-1)$ and $l_2=\frac12(n_2-1)$. The decomposition of $M(n_1,n_2,\C)$ is $l_1\otimes l_2\cong \oplus k$ with $k\in\{|l_1-l_2|,|l_1-l_2|+1,\ldots,l_1+l_2\}$. Again, $j=k\pm\frac12$, $j\ge0$, and
the spectrum of $\widetilde d_1=\pm(j+\frac12)$ for $n_1\ne n_2$ is therefore
\begin{equation}-\frac12(|n_1-n_2|),\;\pm(\frac12|n_1-n_2|+1),\ldots,\;\pm(\frac12(n_1+n_2)-1),\;+\frac12(n_1+n_2),\end{equation}
each eigenspace being the irreducible representation of $\SU(2)$ of spin $j$.
For $n_1= n_2$ the first term is absent, so the spectrum is the same as \eqref{eq:specfuzzy2}. This can also be understood as the first term belonging to the zero-dimensional representation in this case. The spectrum for $\widetilde d_2$ is the same as $\widetilde d_1$. 

Since $D_1$ is isomorphic to $\widetilde d_1\oplus-\widetilde d_1$ by a change of basis, the spectrum of $D_1$ is given in the following table, in which $m=|n_1-n_2|/2$ and $n=(n_1+n_2)/2$.
$$
\begin{tabular}{|l|cccccccc|}
  \hline
  $\gamma^0$ \rule{0pt}{15pt}&$-1$&$-1$&\dots&$-1$&1&\dots&1&1\\[5pt] 
$j$\rule{0pt}{15pt}
&$n-\frac12$&$n-\frac32$&\dots&$m-\frac12$
&$m-\frac12$&\dots&$n-\frac32$&$n-\frac12$\\[5pt]
\hline
$D_1$\rule{0pt}{15pt}
&$ -n$&$\mp(n-1)$&\dots&$m$
&$-m$&\dots&$\pm(n-1)$&$n$\\[5pt]
  \hline
\end{tabular}
$$
Using $D=D_1\oplus D_2$, the spectrum of $D$ is just twice the spectrum for $D_1$.

\paragraph{Reducible fuzzy cases.}
If the action $g(h)$ for a fuzzy space is reducible, the Dirac operator \eqref{eq:fuzzydirac} can be decomposed into a sum of these irreducible cases, which are the fuzzy sphere and the generalised fuzzy sphere. For a given Dirac operator $D$, the algebra $\mathcal A$ can always be replaced by a subalgebra and $D$ remains a Dirac operator. The notion of a symmetry group does however depend on the algebra, since it acts as automorphisms of the algebra. Thus for the fuzzy sphere with $\mathcal A=M(n,\C)$, if the representation of $\SU(2)$ on $\C^n$ splits into irreducibles of dimension $n_i$, $\sum_i n_i=n$, the algebra can be replaced with the subalgebra $\oplus_i\mathcal A_i\subset M(n,\C)$, with $\mathcal A_i=M(n_i,\C)$.
 The Hilbert space of the fuzzy sphere also decomposes into blocks according to the decomposition  $\C^n=\C^{n_1}\oplus\C^{n_2}\oplus\ldots$. The diagonal blocks in the matrices of $\mathcal H_0$ are again fuzzy spheres, and a pair of off-diagonal blocks related by $J$ is a generalised fuzzy sphere. Thus the analysis of the Dirac operator of the fuzzy sphere for a general representation reduces to the analysis of the irreducible cases. Note that if $\mathcal A$ is one of the other two simple algebras, there are some restrictions on the possible decompositions.

\subsection{Commutative analogues}\label{sec:commgeom}

The purpose of this section is to exhibit commutative analogues of the fuzzy sphere and generalised fuzzy sphere. These analogues are constructed so that, in a suitable basis, the Dirac operator of the commutative version is simply obtained from the fuzzy version by replacing the matrix commutators with an appropriate Lie derivative on the sphere.  

\paragraph{Dirac operator on the sphere.} The $2$-sphere $S^2=\{x=(x_1,x_2,x_3)\in\R^{3}\mid x^2=1\}$, with a chosen orientation,  has a spin structure determined by the embedding in $\R^{3}$ and thus a spinor bundle $E\to S^2$ determined by an irreducible Clifford module of type $(0,2)$. 

The tangent bundle of a sphere can be trivialised by adding the normal bundle, in other words, enlarging the structure group to $\SO(3)$. Accordingly, the spinor bundle can be trivialised  (as a vector bundle) by extending the Clifford module to one of type $(0,3)$, which admits $\SU(2)$ transformations. This can be thought of as the pull-back to $S^2$ of the trivial bundle of spinors in $\R^3$. 

This idea is taken as the definition here: the spinor bundle $E$ is defined as the trivial bundle  $\C^2\times S^2\to S^2$, with $\C^2$ a fixed type $(0,3)$ irreducible Clifford module having gamma matrices $\sigma^a$, $a=1,2,3$. (The letter $\sigma$ is used to emphasise that these are gamma matrices in one more dimension.) A useful notation is that a vector $w=(w_1,w_2,w_3)\in\R^{3}$ determines a matrix $w\cdot \sigma=\sum_{a=1}^{3} w_a\sigma^a$ acting in $\C$. At a point $x\in S^2$, each tangent vector $v$  determines the matrix $v\cdot\sigma$, so an oriented orthonormal basis of the tangent space  generates a Clifford module of type $(0,2)$. The Clifford structure varies with $x$ in a non-trivial way. 
The normal vector $x$ determines the chirality operator, 
\begin{equation}\gamma_S=i\, x\cdot\sigma.\end{equation}

The sphere has vector fields
\begin{equation} X_{jk}=x_j\frac{\partial}{\partial x_k}-x_k\frac{\partial}{\partial x_j}.
\end{equation}
that, for $j<k$, form a standard basis of the Lie algebra of $\SU(2)$, satisfying the same relations as the $L_{jk}$ in \eqref{eq:standardgen}. If $F$ is another complex vector bundle over $S^2$ with fibre $F_x$ at $x$, then write $E\otimes F$ for the bundle with fibre $\C^2\otimes F_x$. Given a covariant derivative $\nabla$ on $E\otimes F$, the corresponding Dirac operator is defined by the formula
\begin{equation}d_S=- (x\cdot\sigma)\sum_{j<k}\sigma^j\sigma^k\,\nabla_{X_{jk}}.\end{equation}
That this is the correct formula can be seen by looking at the point $x=(0,0,1)$. There, $X_{12}=0, X_{13}=-\partial/\partial x_1, X_{23}=-\partial/\partial x_2$, so 
\begin{equation}d_S=
-\sigma^3\bigl(\sigma^1\sigma^3\,\nabla_{-\partial/\partial x_1}+\sigma^2\sigma^3\,\nabla_{-\partial/\partial x_2}\bigr)=
\sigma^1\,\nabla_{\partial/\partial x_1}+\sigma^2\,\nabla_{\partial/\partial x_2}.\end{equation}
A similar formula holds at other points by taking suitable bases. The Dirac operator $d_S$ is a self-adjoint operator \cite{WolfEssential}.

\paragraph{Analogue of the fuzzy sphere.} The geometric model for the analogue of the fuzzy sphere is based on the Dirac operator on $E$ (i.e., $F_x=\C$ in the above discussion).

An element $g\in\SU(2)$ acts on $E$ by the simultaneous action of the matrix $g$ on $\C^2$ and the rotation of $S^2$ determined by $g$. This action is compatible with the Clifford structure. The corresponding Lie derivative on sections of $E$, written as $\C^2\otimes \C(S^2)$,  in  the direction $X_{jk}$ is
\begin{equation} \Lie(X_{jk})=\SpinOperator_{jk}+X_{jk}.
\end{equation}

The canonical choice of covariant derivative $\nabla$ on $E$ is the Levi-Civita connection, which is $\SU(2)$-invariant. This is obtained by projecting the covariant derivative from $\R^3$ into the surface, and gives
 \begin{equation} \nabla_{X_{jk}}=X_{jk}+\sum_l -x_j\SpinOperator_{kl}x_l+x_k\SpinOperator_{jl}x_l.\end{equation}

The Dirac operator defined by $\nabla$ forms a real spectral triple. It has eigenvalues \cite{CH,Trautman:1995fr}
\begin{equation}\lambda=\pm\left(r+1\right),\quad r=0,1,2,\ldots\end{equation}
and multiplicities $\mu=2(r+1)$. The eigenspaces form an irreducible representation of $\SU(2)$ of spin $j=r+1/2$.

The Dirac operator can be written in terms of the Lie derivative by computing the difference between the two. The  operator
\begin{equation}H_{jk}= \Lie(X_{jk})-\nabla_{X_{jk}}\end{equation}
is a linear map on each fibre $\C^2$, and is given explicitly by
\begin{equation}\label{eq:Hformula} H_{jk}=\SpinOperator_{jk}+\sum_l x_j\SpinOperator_{kl}x_l-x_k\SpinOperator_{jl}x_l.\end{equation}
Then a calculation gives
\begin{equation}\sum_{j<k}\sigma^j\sigma^kH_{jk}=\frac12\,,\quad\sum_{j<k}\sigma^j\sigma^k\SpinOperator_{jk}=\frac32,\end{equation}
which leads to the following formula \cite{trautman2008} for the Dirac operator,
\begin{equation}
 d_S=-\bigl(x\cdot\sigma\bigr)\left(1+\sum_{j<k}\sigma^j\sigma^k X_{jk}\right).\end{equation}
The relation between the Dirac operator and the chirality operator 
\begin{equation}\label{eq:sphereDgamma}\gamma_Sd_S=-d_S\gamma_S,\end{equation}
can be checked from this formula by direct calculation.

The modified Dirac operator is defined by
\begin{equation}\label{eq:modified-dirac}
\widetilde d_S=\bigl(x\cdot\sigma\bigr)d_S=1+\sum_{j<k}\sigma^j\sigma^k X_{jk},\end{equation}
which is the analogue of the Grosse-Presnajder operator with $[L_{jk},\cdot\,]$ replaced with its Lie-algebraic analogue $X_{jk}$.
This equation can be solved to give the Dirac operator
\begin{equation}d_S=-\bigl(x\cdot\sigma\bigr)\widetilde d_S=i\gamma_S\widetilde d_S. \end{equation}
Since $s=2$, the operators $d_S$ and $\widetilde d_S$ are related by chiral rotation, as in \eqref{eq:modified}. This is $\widetilde d_S=R d_S R^{-1}$ with $R=e^{\pi (x\cdot\sigma)/4}$. Hence $d_S$ and $\widetilde d_S$ have the same spectrum.

Using the Casimir operators $c_1=-\SpinOperator\centerdot\SpinOperator$, $c_2=-X\centerdot X$, $c=-(\SpinOperator+X)\centerdot(\SpinOperator+X)$, the modified Dirac operator can be written
\begin{equation}\label{eq:dirac-casimir}\widetilde d_S=-2\SpinOperator\centerdot X+1=c-c_1-c_2+1.\end{equation}
This formula is the analogue of  \eqref{dfromcasimirs} for the Grosse-Presnajder operator in the fuzzy case, the only difference being that a different list of irreducible representations occurs. The correspondence guarantees that when the same representations occur, then the eigenvalues of the modified Dirac operators in the two cases coincide.

Comparing the spectrum of $d_S$ with the spectrum of the fuzzy sphere $D$, one can see that the fuzzy sphere spectrum corresponds to two Dirac fields on $S^2$, one cutoff after eigenvalues $\pm n$, the other cutoff after eigenvalues $\pm(n-1)$.

Therefore to get an analogue of the Dirac operator it is necessary to double the fermions in the commutative case. Using the opposite sign for the Dirac and chirality operators, one gets the following block matrix operators for two fermion fields on the sphere
\begin{equation}D_S=\begin{pmatrix}d_S&0\\0&-d_S\end{pmatrix},\quad \Gamma_S=\sigma\begin{pmatrix}\gamma_S&0\\0&-\gamma_S\end{pmatrix}.
\end{equation}
with the constant $\sigma$ being either $1$ or $-1$.
The point of this construction is that the $+1$ eigenspace of $\Gamma_S$ consists of one copy of the $+1$ eigenspace of $\gamma_S$ and one copy of the $-1$ eigenspace of $\gamma_S$,  fitting together to form a trivial bundle. The same is true of the $-1$ eigenspace of $\Gamma_S$. 

This remark can be formalised in the following way. Let $\pi_\pm=\frac12(1\pm\gamma_S)$ be the projectors onto the $\pm1$ eigenspaces of $\gamma_S$. Define the unitary operator
\begin{equation}T=\begin{pmatrix}\pi_+&-i\pi_-\\i\pi_-&\pi_+\end{pmatrix}
\end{equation}
to change basis in the Hilbert space.
Then a calculation using \eqref{eq:sphereDgamma} shows that
\begin{equation}TD_ST^{-1}=\begin{pmatrix}0&\widetilde d_S\\\widetilde d_S&0\end{pmatrix},\quad
T\Gamma_S T^{-1}=\sigma\begin{pmatrix}1&0\\0&-1\end{pmatrix},
\end{equation}
which is exactly the commutative analogue of the fuzzy sphere Dirac operator \eqref{eq:diraceven} and its chirality operator \eqref{eq:fuzzychirality}.
This means that one can construct a bundle over $S^2$ with two Dirac fermion fields by simply replacing $[L_{jk},\cdot\,]$ with $X_{jk}$ in \eqref{eq:fuzzydirac}.

\paragraph{Analogue of generalised fuzzy sphere.}
The commutative analogue of the generalised fuzzy sphere is a construction involving monopoles. The vector bundle $E$ decomposes into two complex line bundles $L^+$ and $L^-$ given, at each point, by the $\pm1$ eigenspaces of the chirality $\gamma_S$. The monopole with charge $\kappa$ is the line bundle $L^\kappa$ over $S^2$ with fibre $(L^+)^\kappa=(L^-)^{-\kappa}$, and is toplogically non-trivial if $\kappa\ne0$. 
Since $L^\pm$ are preserved by the Lie derivative and the covariant derivative, they have both a Levi-Civita connection and an action of $\SU(2)$, and so therefore does the general monopole $L^\kappa$.

The spectrum of the Dirac operator $d_S$ on $E\otimes L^\kappa $ is computed in \cite{Jayewardena:1988td}. The eigenvalues are
\begin{equation}\label{eq:monpoleeigenvalues}0,\,\pm\sqrt{r(r+|\kappa|)}\quad\quad r=1,2,\ldots
\end{equation}
The two eigenspaces for the integer $r\ge1$ are each an irreducible representation of spin $j=|\kappa|/2-1/2+r$. The eigenspace of eigenvalue $0$ forms a single irreducible of spin  $j=|\kappa|/2-1/2$ and the chirality operator $\gamma_S$ takes the sign $-\kappa/|\kappa|$ on this subspace. Thus the index of $d_S$ is $-\kappa$. 

The other main feature of  $d_S$ is that for $\kappa\ne0$, there is no real structure. This is because the complex conjugate of the bundle $L^\kappa$ is $L^{-\kappa}$, so to get a real spectral triple one must take the direct sum of the two bundles, $E\otimes( L^{\kappa}\oplus L^{-\kappa})$, the operator $J_S$ mapping from one bundle to the other. This is the counterpart of the two blocks in the matrix case.

Comparing the spectrum of $d_S$ to the spectrum of $D_1$ in \S\ref{sec:gensphere}, the set of irreducible representations in $D_1$ is the same as two copies of $d_S$ with the same value of $\kappa$, providing that $m=|\kappa|/2$ and the spins are cut off at $j=n-\frac12$ for one copy and $j=n-\frac32$ for the other. However, the eigenvalues do not match.

This defect is remedied by adding a mass matrix to the Dirac operator after doubling the fermions. The definition of the commutative Dirac and chirality operators that are the analogues for the generalised fuzzy case is 
\begin{equation}D_S=\begin{pmatrix}d_S&\frac12\kappa\\\frac12\kappa&-d_S\end{pmatrix},\quad \Gamma_S=\sigma\begin{pmatrix}\gamma_S&0\\0&-\gamma_S\end{pmatrix}
\end{equation}
with $\sigma=\pm1$. These obey
\begin{equation}D_S\Gamma_S=-\Gamma_SD_S\end{equation}
as a consequence of \eqref{eq:sphereDgamma}, which still holds in the monopole case. Using the same basis change $T$ as for the case of the fuzzy sphere results in
\begin{equation}\label{eq:commgen}TD_ST^{-1}=\begin{pmatrix}0&\widetilde d_S+\frac12\kappa\gamma_S\\\widetilde d_S+\frac12\kappa\gamma_S&0\end{pmatrix},\quad
T\Gamma_S T^{-1}=\sigma\begin{pmatrix}1&0\\0&-1\end{pmatrix},
\end{equation}
with again $\widetilde d_S=(x\cdot\sigma)d_S$.

The rest of this section is devoted to showing that these operators are the direct analogues of $D_1$ and $\Gamma_1$ of  \S\ref{sec:gensphere}. In general, the Lie derivative on $E\otimes F$ can be written as
\begin{equation}\Lie^{E\otimes F}(X_{jk})=\SpinOperator_{jk}\otimes 1+1\otimes \Lie^F(X_{jk})\end{equation}
and
\begin{equation}H_{jk}^{E\otimes F}=H_{jk}^E\otimes 1+1\otimes H_{jk}^F.
\end{equation}
Here, the bundle $F=L^\kappa$, a line bundle, and so $H_{jk}^F$ is a function on $S^2$. The modified Dirac operator on $E\otimes F$ can be written
\begin{equation}\widetilde d_S=1+\sum_{j<k}\sigma^j\sigma^k\left(\Lie^F(X_{jk})-H^F_{jk}\right).\end{equation}
To get a formula for $H^F_{jk}$, rewrite $H^E_{jk}=H_{jk}$ from \eqref{eq:Hformula} using the 3-dimensional permutation symbol $\epsilon_{jkl}$ and the identity $\sum_j 2\sigma\sigma^j\epsilon_{jkl}=\sigma^k\sigma^l-\sigma^l\sigma^k$, giving
\begin{equation}H^E_{jk}=-\frac12\sigma(x.\sigma)\sum_l x_l\epsilon_{jkl}.\end{equation}
Then $H^F$ is $\kappa$ times the eigenvalue of this matrix on the subspace $\gamma_S=1$, i.e.,
\begin{equation}H^F_{jk}=\frac i2\kappa\sigma\sum_l x_l\epsilon_{jkl}.\end{equation}
Hence
\begin{equation}\sum_{j<k}\sigma^j\sigma^kH^F_{jk}=\frac12\kappa\gamma_S,\end{equation}
and the combination that appears in \eqref{eq:commgen} is
\begin{equation}\label{eq:monopoleoperator}\widetilde d_S+\frac12\kappa\gamma_S=1+\sum_{j<k}\sigma^j\sigma^k\Lie^F(X_{jk})=c-c_1-c_2+1,\end{equation}
which is exactly analogous to the fuzzy operator $\widetilde d_1$ with $m=|\kappa|$. This has the eigenvalues $\pm(j+\frac12)$ by the same argument as the fuzzy case. As a check, the eigenvalues $\lambda$ of $\widetilde d_S$, and hence $d_S$, can be verified. Squaring both sides of \eqref{eq:monopoleoperator} gives
\begin{equation} \lambda^2+\frac14\kappa^2=(j+\frac12)^2\end{equation}
in agreement with \eqref{eq:monpoleeigenvalues}.

In summary, by a suitable change of basis one sees that the operator $D_S$ on the monopole bundle is obtained from the generalised fuzzy sphere Dirac operator $D_1$ by replacing the commutators with Lie derivatives. Hence the eigenvalues are exactly the same, except that there is no maximum cutoff on the spin of the representations in the commutative case. Both operators have index equal to $0$. Similarly, the commutative analogue of the operator $D_2$ is the operator $D_S$ constructed from bundle $F=L^{-\kappa}$.

\section{Acknowledgement}
Support from the Particle Physics Theory Consolidated Grant ST/L000393/1 awarded by the Science and Technology Facilities Council is acknowledged.

\end{document}